\documentclass[10pt]{article}

\usepackage{fullpage}
\usepackage{bm} 
\usepackage{url}
\usepackage{amsmath}
\usepackage{amsfonts}
\usepackage{amsthm}
\usepackage{algorithmic}
\usepackage{graphicx}
\usepackage{textcomp}
\usepackage{xcolor}
\usepackage{enumerate}
\usepackage{booktabs}

\def\BibTeX{{\rm B\kern-.05em{\sc i\kern-.025em b}\kern-.08em
    T\kern-.1667em\lower.7ex\hbox{E}\kern-.125emX}}

\usepackage{hyperref}
\setlength{\marginparwidth}{2cm} 
\usepackage[colorinlistoftodos]{todonotes}


\usepackage[ruled, noend]{algorithm2e}

\usepackage{tikz}
\usepackage{tikz-3dplot}


\definecolor{lightred}{rgb}{1,0.7,0.7}
\definecolor{lightgreen}{rgb}{0.7,1,0.7}
\definecolor{lightblue}{rgb}{0.7,0.7,1}

\definecolor{lightblue}{rgb}{0.55,0.72,0.97}
\definecolor{lightyellow}{rgb}{1,1,0.4} 
\definecolor{lightorange}{rgb}{1,0.7,0.5}
\definecolor{light}{gray}{0.85}
\definecolor{heavy}{gray}{0.35}
\definecolor{goodgreen}{rgb}{0.1, 0.5, 0.1}
\definecolor{apricot}{rgb}{0.98, 0.81, 0.69}
\definecolor{MidnightBlue}{rgb}{0.1, 0.1, 0.44}
\definecolor{Purple}{HTML}{911146}

\colorlet{LightViolet}{violet!40}
\colorlet{LightRed}{red!40}
\colorlet{LightOrange}{orange!40}
\colorlet{LightGreen}{green!40}
\colorlet{LightBlue}{blue!40}
\colorlet{DarkGreen}{green!50!black}
\colorlet{DarkRed}{red!70!black}
\colorlet{DarkCyan}{red!70!black}
\colorlet{DarkBlue}{blue!80!black}
\definecolor{DarkOrange}{rgb}{1.0, 0.49, 0.0}
\definecolor{Airforceblue}{rgb}{0.36, 0.54, 0.66}

\usepackage{microtype}




\newcommand{\calD}{\mathcal D}

\newtheorem{thm}{Theorem}[section]
\newtheorem{definition}[thm]{Definition}

\newtheorem{ex}[thm]{Example}


\newcommand{\nop}[1]{}

\definecolor{mMediumBrown}{HTML}{C87A2F}  

\newcommand{\bigO}[1]{\mathcal{O}(#1)}
\newcommand{\fw}{\mathsf{w}}
\newcommand{\dfw}{\delta}
\newcommand{\eps}{\epsilon}   

\newcommand{\ivme}{\text{IVM}$^{\eps}$\xspace}

\begin{document}

\title{Recent Increments in Incremental View Maintenance}
\author{Dan Olteanu\\ dan.olteanu@uzh.ch \\ University of Zurich}
\date{}

\maketitle

\begin{abstract}
We overview recent progress on the longstanding problem of incremental view maintenance (IVM), with a focus on the fine-grained complexity and optimality of IVM for classes of conjunctive queries. This  theoretical progress guided the development of IVM engines that reported practical benefits in academic papers and industrial settings.
When taken in isolation, each of the reported advancements is but a small increment. Yet when taken together, they may well pave the way to a deeper understanding of the IVM problem.

This paper accompanies the invited Gems of PODS 2024 talk with the same title. Some of the works highlighted in this paper are based on prior or on-going collaborations with: Ahmet Kara, Milos Nikolic, and Haozhe Zhang in the F-IVM project; and Mahmoud Abo Khamis, Niko G\"obel, Hung Ngo, and Dan Suciu at RelationalAI.
\end{abstract}

\section{Incremental View Maintenance}
\label{sec:problem-setting}

Incremental View Maintenance (IVM) is a fundamental problem: Given a query and a database, the goal is to maintain the output of the query under updates to the input database, e.g.,~\cite{DBT:VLDBJ:2014, Idris:dynamic:SIGMOD:2017, DattaKMSZ18, Kara:TODS:2020, TaoY22, FIVM:VLDBJ:2023, DBLP:journals/pvldb/WangHDY23, BudiuCMRT23}. The updates can be inserts and deletes\footnote{The term IVM is a misnomer, as the word {\em incremental} wrongly suggests that only inserts are supported, with no support for deletes, which are denoted by the word {\em decremental}. An alternative term coined for this problem is {\em fully dynamic computation}, where the word {\em fully} emphasizes that both inserts and deletes are supported.} of tuples into and respectively from the input database.
The motivation for this problem is the observation that small changes to the input database may beget small changes to the query output. It is therefore deemed inefficient to recompute the query output on the updated database from scratch. Instead, it may be more efficient to propagate the change from the input database to the query output.

    \begin{figure}
        \centering
    \begin{tikzpicture}
    \node at (1.0,0.25) {\small Query};
    \node at (1.0,-0.25) {{\Large\bf +}};
    \node at (1.0,-0.75) {\small Database};

    \node at (4.5,-0.3){
        \begin{tikzpicture}
            \tikzstyle{background}=[rectangle, rounded corners=2mm]
            \fill[rounded corners=2mm, fill=blue!20,inner sep= 2.5mm] (0,0) rectangle (1.7,1.5) node[pos=.5, text width=1.2cm, align=center] {\small Data\\ Structure};
        \end{tikzpicture}
    };
    
    \node at (1.6, -0.3) (anchor1''){};
    \node at (3.8, -0.3) (anchor2''){};
    \node at (2.7, 0.1) {\small preprocess};
    
    \draw[->,color = blue,minimum height=4cm,minimum width=7cm, line width = 0.07cm](anchor1'')-- (anchor2'');
    
    \node at (2.7, -0.7) {\em\small \color{mMediumBrown} preprocessing};
    \node at (2.7, -1.1) {\em\small \color{mMediumBrown} time};

    \node at (6.1, -1.0) {\small enumerate};
    \draw[dashed,->,color = blue,minimum height=4cm,minimum width=7cm, line width = 0.07cm](5.4,-0.6)-- (6.95,-0.6);
    \node at (7.83,-1.8cm){
        \begin{tikzpicture}
            \tikzstyle{background}=[rectangle, rounded corners=2mm]
            \fill[rounded corners=2mm, fill=blue!20,inner sep= 2.5mm] (0,0) rectangle (1.75,2.2) node[pos=.5, align=center] {};
            \node at (0.85, 1.9) {\small tuple $\#1$};
            \node at (0.85, 1.45) {\small tuple $\#2$};
            \node at (0.85, 1.05) {$\dots$};
            \node at (0.85, 0.7) {\small tuple $\#n$};
            \node at (0.85, 0.3) {\small EOS};

            \node at (0.85, -0.5) {\small Output};
        \end{tikzpicture}
    };
    \node[rotate=90, text width=2cm, align=center] at (9.25, -1.5cm) {\em\small \color{mMediumBrown} enumeration delay};

    \node[text width=2cm, align=center] at (6.2, 0.5) {\small access\\request};
    \draw[<-,color = blue,minimum height=4cm,minimum width=7cm, line width = 0.07cm](5.4,0.0)-- (7.0,0.0);
    \node at (7.8,0.1cm){
        \small User
    };

    \node at (0.97, -3.3) (anchor1''''){};
    \node at (0.97, -1.1) (anchor2''''){};
    \node at (0.97, -3.7) {update};
    
    \draw[->,color = blue,minimum height=4cm,minimum width=7cm, line width = 0.07cm](anchor1'''')-- (anchor2'''');
    
    \node at (1.5, -3.7) (anchor3){};
    \node at (4.5, -3.7) (anchor4){};
    \node at (4.4, -3.85) (anchor42){};
    \node at (4.4, -1.1) (anchor5){};
    
    \draw[color = blue,minimum height=4cm,minimum width=7cm, line width = 0.07cm](anchor3)-- (anchor4);
    
    \draw[->,color = blue,minimum height=4cm,minimum width=7cm, line width = 0.07cm](anchor42)-- (anchor5);
    
    \node at (2.8, -3.4) {\small maintain};

    \node[rotate=90] at (0.65, -2.2) {\small maintain};

        \node at (2.8, -4.0) {\em\small \color{mMediumBrown} update time};

        \node[rotate=90] at (1.3, -2.2) {\em\small \color{mMediumBrown} update time};
    \end{tikzpicture}

    \caption{The Incremental View Maintenance problem: Given a query and a database, we construct a data structure  that encodes the query output. An update triggers changes to both the database and the data structure. On a user access request, the query output is enumerated one tuple at a time.}
    \label{fig:IVM-problem}
    \end{figure}
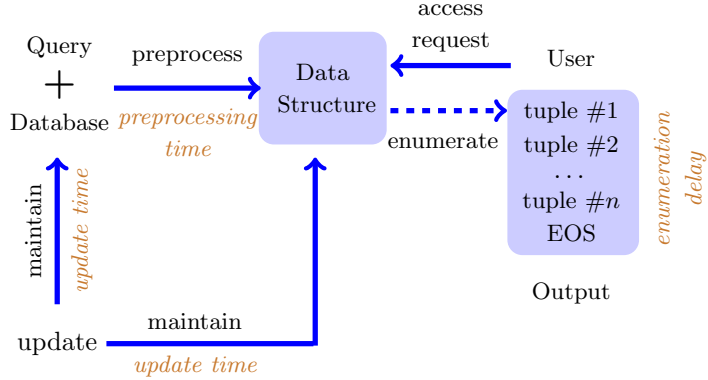

Fig.~\ref{fig:IVM-problem} gives a refinement of the IVM problem, where its time complexity is decomposed into {\em preprocessing time}, {\em update time}, and {\em enumeration delay}. The preprocessing time is the time to compute a  data structure that encodes the query output. The update time is the time to update the database and the data structure on a single-tuple insert or delete. The enumeration delay is the maximum of three times~\cite{DurandFO07}: the time between the start of the enumeration and outputting the first tuple, the time between outputting any two consecutive tuples, and the time between outputting the last tuple and the end of the enumeration\footnote{This is sometimes referred to as {\em full enumeration}. A variant is {\em delta enumeration}, where we only enumerate the tuples in the change to the query output.}.

This paper overviews a recent trend in the study of the IVM problem that aims to understand its fine-grained complexity and its optimality. It is concerned with questions such as: 
\begin{itemize}
    \item For which IVM workload can we achieve constant update time and enumeration delay?
    \item More generally, when can we achieve worst-case optimal update time and enumeration delay? 
    \item What is the trade-off between the preprocessing time, the update time, and the enumeration delay?
\end{itemize}
These recent IVM efforts aim to mirror the development of worst-case optimal join algorithms in the static setting (where the input database does not change)~\cite{Yannakakis:VLDB:81,LeapFrogTrieJoin2014,Ngo:JACM:18}, albeit in the more general dynamic setting. 
Existing answers to these questions rely on classical IVM techniques, such as delta queries and materialized views, and on newer IVM techniques, such as query factorization and heavy/light data partitioning. 

A remarkable feature of this new development is that it can achieve lower complexity and faster runtime in practice when compared with the classical IVM approaches based on delta queries and materialized views. It has also been implemented to a varying extent in open-source prototypes, e.g., \textsf{DBToaster}~\cite{DBT:VLDBJ:2014}, \textsf{Dynamic Yannakakis}~\cite{Idris:dynamic:SIGMOD:2017,DBLP:journals/vldb/IdrisUVVL20},  \textsf{F-IVM}~\cite{DBLP:conf/sigmod/NikolicO18,FIVM:VLDBJ:2023}, and Crown~\cite{DBLP:journals/pvldb/WangHDY23}, and in commercial systems, e.g., RelationalAI (\url{https://relational.ai}).

This development poses formidable challenges but also exciting opportunities. 
Beyond notable yet limited explorations, understanding the optimality of the IVM problem for the entire language of conjunctive queries remains open. Further progress would require to bridge database theory to recent developments in fine-grained complexity for dynamic algorithms and to contribute with novel lower bound proof techniques and efficient algorithms.

We describe key IVM techniques in Sec.~\ref{sec:ivm-techniques}. We use them to achieve the best possible maintenance for a variety of IVM workloads in Sec.~\ref{sec:q} and \ref{sec:optimality}. We point to some relevant work we left out in Sec.~\ref{sec:further}.

\section{Preliminaries}

A schema $\mathcal{S}$ is a tuple of variables, which we also see as a set. Let $\mathsf{Dom}(X)$ denote the domain of a variable $X$. A tuple $t$ over schema $\mathcal{S}$ has the domain $\mathsf{Dom}(\mathcal{S}) = \prod_{X \in \mathcal{S}}{\mathsf{Dom}(X)}$. The empty tuple $()$ is the tuple over the empty schema. 
Given a ring $(\mathbf{D}, +,*, {\bf 0}, {\bf 1})$ and a schema $\mathcal{S}$, a relation $R$ over  $\mathcal{S}$ and $\mathbf{D}$ is a function $R: \mathsf{Dom}(\mathcal{S}) \to \mathbf{D}$ mapping tuples over schema $\mathcal{S}$ to values in $\mathbf{D}$ such that $R(t) \neq {\bf 0}$ for finitely many tuples $t$. The tuple $t$ is called a {\em key}, while its mapping $R(t)$ is the {\em payload} of $t$ in $R$. The size $|R|$ of $R$ is the number of tuples in $R$ mapped to non-zero payloads. A database $\mathcal{D}$ is a set of relations over the same ring. Its size $|\mathcal{D}|$ is the sum of the sizes of its relations. 
Each relation $R$ is implemented as a hash map with key-payload entries. One can look up, insert, and delete entries from this map in amortized constant time, and enumerate its entries with constant delay. For a subset $\mathcal{S}$ of the schema of $R$, we need at times an index that can enumerate with constant delay all tuples from $R$ that agree on a given projection $t\in\pi_{\mathcal S} R$ and insert and delete index entries in amortized constant time~\cite{FIVM:VLDBJ:2023}.

We consider the language of SQL queries with natural joins and group-by aggregates. This includes conjunctive queries, where aggregates are projections. We use the following query syntax:
\begin{align*}
  Q(X_1,\ldots,X_f) = \sum_{X_{f+1}} \ldots \sum_{X_{m}} \prod_{i \in [n]} R_i(\mathcal{S}_i),  
\end{align*}
where $X_1,\ldots,X_f$ are the free (group-by) variables, $X_{f+1},\ldots,X_m$ are the bound variables, $R_i(\mathcal{S}_i)$ are the atoms of $Q$, $\sum_{X_{f+1}}$ is the aggregation operator that  marginalizes the bound variable $X_{f+1}$ according to the sum operation $+$ in the ring. For a variable $X$, $\mathit{atoms}(X)$ is the set of atoms $R_i(\mathcal{S}_i)$ with $X\in\mathcal{S}_i$.
We also use a union operator $\uplus$ to express updates (insert/delete) to relations. Given relations $R$ and $S$ over schema $\mathcal{S}_1$ and relation $T$ over schema $\mathcal{S}_2$, a variable $X \in \mathcal{S}_1$, we define the three operators as follows:

\begin{tabular}{@{~~~~}l@{~~~~~}r@{~}r@{\;}l@{}}
\multicolumn{4}{@{~}l}{\em union:}\\
& $\forall t \in \mathsf{D}_1{:}$ 
& $(R + S)(t)$ 
& $= R(t) + S(t)$ \\[4pt]
\multicolumn{4}{@{~}l}{\em join:}\\
& $\forall t \in \mathsf{D}_2{:}$ 
& $(S \cdot T)(t)$ 
& $= S(\pi_{\mathcal{S}_1}(t)) * T(\pi_{\mathcal{S}_2}(t))$ \\[4pt]
\multicolumn{4}{@{~}l}{\em aggregation:}\\
& $\forall t \in \mathsf{D}_3{:}$ 
& $(\sum_{X} R)(t)$ 
& $= \textstyle\sum \,\{\, R(t_1) * g_{X}(\pi_{\{X\}}(t_1)) \mid$ \\
& & & \hspace*{2em} $t_1 \,{\in}\, \mathsf{D}_1, t = \pi_{\mathcal{S}_1 \setminus \{X\}}(t_1) \}$ 
\end{tabular}
\\[6pt]
where $\mathsf{D}_1 = \mathsf{Dom}(\mathcal{S}_1)$, $\mathsf{D}_2 = \mathsf{Dom}(\mathcal{S}_1\cup \mathcal{S}_2)$, and $\mathsf{D}_3 = \mathsf{Dom}(\mathcal{S}_1\setminus \{X\})$, and $\pi_{\mathcal{S}}(t)$ is a tuple representing the projection of tuple $t$ on the schema $\mathcal{S}$. The function $g_X$ is a {\em lifting} function that maps $X$-values to ring elements. When marginalizing $X$,  we aggregate the values $g_X(x)$ from $\mathbf{D}$ and not the values $x$ from $\mathsf{Dom}(X)$.

There are two benefits of using relations over rings: simplicity and optimization. Simplicity is due to the uniform treatment of updates: Updates are tuples, which are mapped to positive ring values in case of inserts and negative ring values in case of deletes. 
The optimization benefit is due to the commutativity of inserts and deletes: A batch of updates can be executed in any order, their cumulative effect remains the same. This is particularly relevant for asynchronous execution of updates in a distributed setting.

The ring of integers $(\mathbb{Z}, +,*, 0, 1)$ is used in the literature to maintain the tuple multiplicities, e.g., in the DBToaster and F-IVM systems~\cite{DBT:VLDBJ:2014,FIVM:VLDBJ:2023}. A positive multiplicity is  interpreted as the number of copies of a tuple; for a tuple in the query output, this is the number of derivations from the input tuples that led to that output tuple. A negative multiplicity may appear as side effect of out-of-order updates. We assume that batches of updates are valid in the sense that they map between databases, where all (input and output) tuples have positive multiplicities.

Our data model follows prior work on $K$-relations over provenance semirings~\cite{Green:2007:ProvenanceSemirings}, generalized multiset relations~\cite{DBLP:conf/pods/Koch10}, and factors over semirings~\cite{FAQ:PODS:2016}.

\section{IVM Techniques by Example}
\label{sec:ivm-techniques}

\begin{figure*}[t]
    \centering
  \begin{tikzpicture}
       \node at(0, 0) {
  \begin{tabular}{l@{\hspace*{2em}}l@{\hspace*{2em}}l@{\hspace*{2em}}l@{\hspace*{2em}}l}
  
  \begin{tabular}{c@{\;}c|@{\;}c@{\;}}
    \multicolumn{3}{c}{$R$} \\
    \toprule
    $A$ & $B$ & \# \\\midrule
    $a_1$ & $b_1$ & $2$\\
    $a_2$ & $b_1$ & $3$\\
    \bottomrule
    \\
  \end{tabular}
  
  &
  
  \begin{tabular}{c@{\;}c|@{\;}c@{\;}}
    \multicolumn{3}{c}{$S$} \\
    \toprule
    $B$ & $C$  & \# \\\midrule
    $b_1$ & $c_1$ & $2$\\
    $b_1$ & $c_2$ & $1$\\
    \bottomrule
    \\
    \end{tabular}
  
  &
  
  \begin{tabular}{c@{\;}c|@{\;}c@{\;}}
    \multicolumn{3}{c}{$T$} \\
    \toprule
    $C$ & $A$ & \# \\\midrule
    $c_1$ & $a_1$ & $1$\\
    $c_2$ & $a_1$ & $3$\\
    $c_2$ & $a_2$ & $3$\\
    \bottomrule
  \end{tabular}

  &

  \begin{tabular}{c@{\;}c@{\;}c|@{\;}c@{\;}}
    \multicolumn{4}{c}{$R \cdot S \cdot T$} \\
    \toprule
    $A\,  $ & $B$ & $C$ &  \# \\\midrule 
    $a_1\,  $ & $b_1$  & $c_1$ & $2 \cdot 2 \cdot 1 = 4$ \\
    $a_1\,  $ & $b_1$  & $c_2$ & $2 \cdot 1 \cdot 3 = 6$ \\
    $a_2\,  $ & $b_1$  & $c_2$ & $3 \cdot 1 \cdot 3 = 9$ \\  
    \bottomrule 
  \end{tabular}

  &

  \begin{tabular}{c@{\;}|@{\;}c@{\;}}
    \multicolumn{2}{c}{$Q$} \\
    \toprule
    $\emptyset$ & \#  \\\midrule
    $(\, )$ & $4 + 6 + 9 = 19$\\
    \bottomrule
    \multicolumn{2}{c}{} \\ 
    \multicolumn{2}{c}{} \\ 
  \end{tabular}
  \end{tabular}
  };
  \end{tikzpicture}

  \begin{tikzpicture}
       \node at(0, 0) {
  \begin{tabular}{l@{\hspace*{2em}}l@{\hspace*{2em}}l@{\hspace*{2em}}l}
  \begin{tabular}{c@{\;}c|@{\;}c@{\;}}
    \multicolumn{3}{c}{$\delta R = \{(a_2,b_1) \mapsto -2\}$} \\
    \toprule
    $A$ & $B$ & {\#} \\\midrule
    $a_2$ & $b_1$ & ${\color{red}-2}$\\
    \bottomrule
  \end{tabular}
  & 
  \begin{tabular}{c@{\;}c@{\;}c|@{\;}c@{\;}}
    \multicolumn{4}{c}{$\delta (R \cdot S \cdot T)$} \\
    \toprule
    $A\,  $ & $B$ & $C$ &  \# \\\midrule 
    $a_2\,  $ & $b_1$  & $c_2$ & ${\color{red}-2 \cdot 1 \cdot 3 = -6}$ \\  
    \bottomrule 
  \end{tabular}
  &
  \begin{tabular}{c@{\;}|@{\;}c@{\;}}
    \multicolumn{2}{c}{$\delta Q$} \\
    \toprule
    $\emptyset$ & \#  \\\midrule
    $(\, )$ & ${\color{red} -6}$\\
    \bottomrule
  \end{tabular}
  &
  \begin{tabular}{c@{\;}|@{\;}c@{\;}}
    \multicolumn{2}{c}{$Q \uplus \delta Q$} \\
    \toprule
    $\emptyset$ & \#  \\\midrule
    $(\, )$ & ${\color{red} 19 + - 6 = 13}$\\
    \bottomrule
  \end{tabular}

  \end{tabular}
  };
  \end{tikzpicture}
    \caption{IVM example. Top row: Computation of the triangle query over a database with integer payloads. Bottom row: Delta views and updated query output after a delete to $R$.}
    \label{fig:IVM-triangle-example}
\end{figure*}
  
We exemplify IVM techniques using the Boolean triangle query: 
\begin{align*}
    Q = \sum_{A,B,C} R(A,B)\cdot S(B,C)\cdot T(C,A)
\end{align*}
and databases over the ring of integers $(\mathbb{Z}, +,*, 0, 1)$. 

\subsection{Delta Queries}

The delta queries represent the most widely supported IVM technique~\cite{Chirkova:Views:2012:FTD}. We explain it next. We denote the update to a relation $R$ as the {\em delta relation} $\delta R$ that maps the tuples to be inserted to $+1$ and the tuples to be deleted to $-1$. The updated relation is  the union of the old relation and the delta relation: $R := R \uplus \delta R$. 
For a materialized view $V$, the {\em delta view} $\delta V$ defines the change to its content. If $V$ is an input relation $R$, then $\delta V = \delta R$. If $V$ is defined using operators on other views, $\delta V$ is derived using the following {\em delta rules} (multiplication with the empty delta view $\emptyset$ yields $\emptyset$):
\begin{align}
  \delta (V_1 \uplus V_2) &= \delta V_1 \uplus \delta V_2 \label{eq:delta-union}\\
  \delta (V_1 \cdot V_2) &= (\delta V_1 \cdot V_2) \uplus (V_1 \cdot \delta V_2) \uplus (\delta V_1 \cdot \delta V_2) \label{eq:delta-product}\\
  \delta \left(\sum_{X} V\right) &= \sum_{X} \delta V \label{eq:delta-sum}
\end{align}

\begin{ex}\em \label{ex:triangle-delta}
 Fig.~\ref{fig:IVM-triangle-example} shows a database with three relations $R$, $S$, and $T$, the  output of their join, and the output of $Q$. Each tuple has a multiplicity, which is given for the input tuples and computed for the output tuples. For instance, the multiplicity of the tuple $(a_2,b_1,c_2)$ in the join output is the product of the multiplicities of $R(a_2,b_1)$, $S(b_1,c_2)$, and $T(c_2,a_2)$. The update $\delta R$ triggers changes to $R$, the join output and the query output. The new relation $R$ is $R \uplus \delta R$, where $(a_2,b_1)$ is now mapped to $3-2=1$. The change to the join output is computed using Eq.~\eqref{eq:delta-union} to \eqref{eq:delta-sum}, where $\delta S = \delta T = \emptyset$, $V \uplus \emptyset = V$, and $V \cdot \emptyset = \emptyset$ for any view $V$:
\begin{align*}
    \delta Q & =  \delta \left(\sum_{A,B,C} R(A,B) \cdot S(B,C) \cdot T(C,A) \right) \\
    &\overset{\eqref{eq:delta-sum}}{=} \sum_{A,B,C} \delta \left(R(A,B) \cdot S(B,C) \cdot T(C,A)\right) \\
    & \overset{\eqref{eq:delta-product},\eqref{eq:delta-union}}{=} \sum_{A,B,C} \delta R(A,B) \cdot S(B,C) \cdot T(C,A)\\
    &= \delta R(a_2,b_1) \cdot \sum_{C} S(b_1,C) \cdot T(C,a_2)
\end{align*}
Only one out of the three tuples in the join output changed due to the update to $R$.    
\end{ex}

Assume each input relation has size $O(N)$ and we process a single-tuple update to any of the relations. Then, recomputation from scratch of the query output  using a worst-case optimal join algorithm takes time $O(N^{3/2})$ (or slightly better using Strassen-like matrix multiplication). Using delta queries, the time for a single-tuple update is $O(N)$: For a single-tuple update $\delta R = \{(a,b)\mapsto m\}$ to $R$, we need to intersect the two lists of $C$-values that are paired with $a$ in $T$ and with $b$ in $S$. Yet we can obtain a lower update time.

\subsection{Materialized Views} 

Higher-order maintenance~\cite{DBT:VLDBJ:2014,FIVM:VLDBJ:2023} uses materialized views to speed up the delta queries. Such views need to be maintained as well.

\begin{ex}\em 
We continue Ex.~\ref{ex:triangle-delta} and now materialize the view: $V_{ST}(B,A) = \sum_C S(B,C)\cdot T(C,A)$. Then, on an update $\delta R$, the delta query is computed as:
\begin{align*}
    \delta Q = \sum_{A,B} \delta R(A,B) \cdot V_{ST}(B,A)
\end{align*}
On an update $\delta S$, we maintain both the view $V_{ST}$ and the query $Q$:
\begin{align*}
  \delta Q &= \sum_{A,B} R(A,B) \cdot \delta V_{ST}(B,A)\\
  \delta V_{ST}(B,A) &= \delta S(B,C)\cdot T(C,A)
\end{align*}
The time to process $\delta R$ is $O(|\delta R|)$, since for each single-tuple update in $\delta R$ we only need one lookup in $V_{ST}$. On the other hand, a single-tuple update $\delta S=\{(b,c)\mapsto m\}$ costs $O(N)$ time, since there may be $O(N)$ $A$-values paired with $c$ in $T$. 
If we create materialized views for the join of $R$ and $S$ and for the join of $R$ and $T$, then a single-tuple update to any input relation takes $O(N)$ time.
\end{ex}
Materialized views may reduce the update time in case of updates to one of the input relations by conveniently materializing one view, but it does not reduce the update time for updates to all relations. The price to pay is an increase in used storage: Whereas the delta query approach does not require additional space, the materialized view approach requires $O(N^2)$ extra storage for the view $V_{ST}$.

\subsection{Heavy/Light Data Partitioning} 
\label{sec:heavy-light}

We can further lower the update time by employing a new technique in addition to delta queries and materialized views: partitioning the input relations based on the heavy/light join keys and using an adaptive maintenance strategy that treats heavy keys differently from the light keys. In particular, we partition $R$ based on $A$ as $R = R_L \uplus R_H$, where $R_L = \{t\in R \mid |\sigma_{A=t.A}| < N^\epsilon\}$ is the light part 
and $R_H = R \setminus R_L$ is the heavy part. The parameter $\epsilon\in[0,1]$ determines the heavy/light threshold. We can infer that for all $A$-values $a$ in the light part, it holds: $|\sigma_{A=a} R_L| < N^\epsilon$. We can also infer that there are at most $N^{1-\epsilon}$ heavy values in the heavy part, since each of them is paired with at least $N^\epsilon$ $B$-values. Similarly, we partition $S = S_L \uplus S_H$ on $B$ and $T = T_L \uplus T_H$ on $C$. The query $Q$ becomes a summation of eight skew-aware queries, one query for each combination of parts of the three relations:
\begin{align*}
    Q = \sum_{A,B,C} R_U(A,B)\cdot S_V(B,C)\cdot T_W(C,A), \hspace*{1em} \text{for } U,V,W\in\{L,H\}
\end{align*}
We next consider a single-tuple update $\delta R = \{(\alpha,\beta)\mapsto m\}$; updates to $S$ and $T$ follow the same reasoning as the query is symmetric. On this update, we need to compute the delta queries:
\begin{align*}
    \delta Q_{VW} = \delta R(\alpha,\beta)\cdot \sum_{C} S_V(\beta,C)\cdot T_W(C,\alpha), \hspace*{1em} \text{ for } V,W\in\{L,H\} \label{eq:vw}
\end{align*}
To compute $\delta Q_{VW}$, we need to intersect the list of $C$-values paired with $\beta$ in $S_V$ and the list of $C$-values paired with $\alpha$ in $T_W$. The cost of the intersection depends on the heavy/light parts $V$ and $W$ and is proportional to the size of the smallest of the two lists. 
\begin{itemize}
    \item It takes $O(N^\epsilon)$ time to compute $\delta Q_{LL}$: There are at most $N^\epsilon$ $C$-values paired with $\beta$ in $S_L$, since $\beta$ is light in $S_L$. 
    \item It takes $O(N^{1-\epsilon})$ time to compute $\delta Q_{HH}$: There are at most $N^{1-\epsilon}$ heavy $C$-values in $T_H$. 
    \item It takes $O(N^{\min\{\epsilon,1-\epsilon\}})$ time to compute $\delta Q_{LH}$: There are at most $N^\epsilon$ $C$-values paired with $\beta$ in $S_L$ and at most $N^{1-\epsilon}$ heavy $C$-values in $T_H$. 
\end{itemize}
For $\delta Q_{HL}$, the above reasoning does not suffice. Instead, we materialize the view $V_{ST}(B,A) = \sum_C S_H(B,C)\cdot T_L(C,A)$ in a preprocessing step. Then, it takes $O(1)$ time to compute $\delta Q_{HL} = \delta R(\alpha,\beta)\cdot V_{ST}(\beta,\alpha)$, which requires a single lookup in $V_{ST}$. Yet we still need to maintain the materialized view  $V_{ST}$ under updates to $S$ and $T$. On a single-tuple update $\delta S = \{(\beta,\gamma)\mapsto m\}$, it takes $O(N^\epsilon)$ time to compute $\delta V_{ST}(\beta,A) = \delta S(\beta,\gamma) \cdot T_L(\gamma, A)$ since there are at most $N^\epsilon$ $A$-values paired with $\gamma$ in $T_L$. On a single-tuple update $\delta T = \{(\gamma,\alpha)\mapsto m\}$, it takes $O(N^{1-\epsilon})$ time to compute $\delta V_{ST}(B ,\alpha) = S_H(B,\gamma) \cdot \delta T(\gamma, \alpha)$ since there are at most $N^{1-\epsilon}$ heavy $B$-values in $S_H$. The single-tuple update time is thus the maximum of all the aforementioned times: $O(N^{\max\{\epsilon,1-\epsilon\}})$. Its minimum is $O(N^{1/2})$ reached for $\epsilon=1/2$. This update time is asymptotically better than the $O(N)$ update time obtained when using delta queries and materialized views. A further aspect to consider is that, after updates, the heavy/light partitioning can become invalid, since heavy (light) values become light (heavy). We therefore need to rebalance the partitions of the relations and recompute the materialized views after sufficiently many updates~\cite{DBLP:conf/icdt/KaraNNOZ19, Kara:TODS:2020, DBLP:conf/pods/0002NOZ20}. 

This adaptive approach was introduced in the IVM$^\epsilon$ framework for cyclic~\cite{DBLP:conf/icdt/KaraNNOZ19, Kara:TODS:2020} and hierarchical queries~\cite{DBLP:conf/pods/0002NOZ20,DBLP:journals/lmcs/KaraNOZ23}. It was extended to $k$-clique counting and to a parallel batch-dynamic triangle count algorithm~\cite{DBLP:conf/apocs/DhulipalaLSY21}. There is also work on a trade-off between the update time and approximation quality for triangle counting~\cite{DBLP:conf/icdt/LuT21}.

\subsection{Lower Bounds} 

A necessary technique in the IVM toolbox concerns proofs of lower bounds on the update time, enumeration delay, and preprocessing time. We next showcase the main technique used in the literature: A reduction from the Online Vector Matrix-Vector Multiplication (OuMv) problem~\cite{Henzinger:OMv:2015} to the triangle detection problem~\cite{BerkholzKS17,DBLP:journals/sigact/Williams18,DBLP:conf/icdt/KaraNNOZ19}, which is the Boolean version of our triangle count query where we are only interested in whether the count is greater than 0. In the following, we denote this Boolean query by $Q_b$.

\begin{definition}[\cite{Henzinger:OMv:2015}]\em 
    The {\em OuMv problem} is defined as follows:
    \begin{itemize}
        \item Input: Boolean matrix $\mathbf{M}\in\mathbb{B}^{n\times n}$ and $n$ pairs $(\mathbf{u}_r,\mathbf{v}_r)$ of Boolean column-vectors 
        $\mathbf{u}_r,\mathbf{v}_r\in\mathbb{B}^{n}$, for $r\in[n]$.
        \item Goal: After seeing each pair $(\mathbf{u}_r,\mathbf{v}_r)$, output $\mathbf{u}_r^{\top}\ \mathbf{M}\ \mathbf{v}_r$.    
    \end{itemize}
The {\em OuMv Conjecture} is as follows: For any $\gamma > 0$, there is no algorithm that solves OuMv in time $O(n^{3-\gamma})$. 
\end{definition}

We next show the following lower bound:

\begin{thm}[\cite{BerkholzKS17,DBLP:conf/icdt/KaraNNOZ19}]
    Let $\mathcal{D}$ be any database of size $N$. For any $\gamma > 0$, there is no algorithm that incrementally maintains $Q_b$ with update time $O(N^{1/2-\gamma})$ and enumeration delay  $O(N^{1-\gamma})$ unless the OuMv Conjecture fails.
\end{thm}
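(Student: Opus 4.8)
The plan is to reduce the OuMv problem to incrementally maintaining $Q_b$, so that a hypothetical algorithm with the stated bounds would solve OuMv in time $O(n^{3-\gamma'})$ for some $\gamma'>0$. Throughout, the database consists of the three relations $R(A,B)$, $S(B,C)$, $T(C,A)$ of total size $N=\Theta(n^2)$, so that $N^{1/2-\gamma}=\Theta(n^{1-2\gamma})$ and $N^{1-\gamma}=\Theta(n^{2-2\gamma})$. The encoding is: $S$ stores the matrix, with $S(b_i,c_j)$ present exactly when $\mathbf{M}_{ij}=1$; and $R$ and $T$ store the current query vectors using a single dummy constant $\ast$ for the variable $A$, so that in round $r$ we keep $R(\ast,b_i)$ present iff $(\mathbf{u}_r)_i=1$ and $T(c_j,\ast)$ present iff $(\mathbf{v}_r)_j=1$. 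Then a triangle $R(\ast,b_i)\cdot S(b_i,c_j)\cdot T(c_j,\ast)$ exists iff there are $i,j$ with $(\mathbf{u}_r)_i=\mathbf{M}_{ij}=(\mathbf{v}_r)_j=1$, i.e.\ iff $\mathbf{u}_r^{\top}\mathbf{M}\,\mathbf{v}_r=1$; hence the OuMv answer in round $r$ is precisely the truth value of $Q_b$ on the current database.

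First I would start the IVM algorithm on the empty database; since this has constant size, its preprocessing costs $O(1)$, which is why no hypothesis on preprocessing time is needed. Next I would build $S$ by issuing the at most $n^2$ single-tuple inserts that encode $\mathbf{M}$. Then, for $r=1,\dots,n$, I would (i) delete the $R$- and $T$-tuples inserted in round $r-1$ and insert the at most $2n$ tuples encoding $(\mathbf{u}_r,\mathbf{v}_r)$ — altogether $O(n)$ single-tuple updates — and (ii) issue an access request and run the enumeration: by the delay bound, within $O(N^{1-\gamma})$ time the enumeration either emits the empty tuple $()$ or signals end-of-stream, which tells us whether $\mathbf{u}_r^{\top}\mathbf{M}\,\mathbf{v}_r=1$. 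After the updates of each round the database is exactly the clean encoding of $\mathbf{M},\mathbf{u}_r,\mathbf{v}_r$, and we only ever delete tuples we previously inserted, so all intermediate databases are valid (positive multiplicities) and of size $O(n^2)$.

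The running time is then pure bookkeeping. Building $S$ costs $n^2\cdot O(N^{1/2-\gamma})=O(n^{3-2\gamma})$. Each round costs $O(n)\cdot O(N^{1/2-\gamma})=O(n^{2-2\gamma})$ for the updates plus $O(N^{1-\gamma})=O(n^{2-2\gamma})$ for the enumeration, so $O(n^{3-2\gamma})$ over the $n$ rounds. Adding the $O(n^2)$ spent forming the update instructions, the total is $O(n^{3-2\gamma})=O(n^{3-\gamma'})$ with $\gamma'=2\gamma>0$, contradicting the OuMv Conjecture.

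The delicate points — rather than any hard computation — are: (a) pinning $N=\Theta(n^2)$ so that the two exponents in the statement translate into an $o(n^3)$ budget, together with the observation that although $N$ grows during the setup phase it never exceeds $O(n^2)$, so the per-operation bounds apply uniformly; (b) phrasing the decision ``is $\mathbf{u}_r^{\top}\mathbf{M}\,\mathbf{v}_r=1$?'' as a single use of the enumeration interface, which the delay guarantee answers within $O(N^{1-\gamma})$; and (c) the observation that feeding the matrix in through updates from an empty database removes any need for an assumption on preprocessing cost. The reason an update-time exponent of $1/2$ and a delay exponent of $1$ both appear is that each round performs $\Theta(n)$ updates but only one enumeration, so the two contributions must each be $o(n^2)$.
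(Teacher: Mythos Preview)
Your proposal is correct and follows essentially the same reduction as the paper: encode $\mathbf{M}$ in $S$ and the round-$r$ vectors in $R$ and $T$ via a single dummy $A$-value, then read off $\mathbf{u}_r^{\top}\mathbf{M}\,\mathbf{v}_r$ from $Q_b$, with the identical cost bookkeeping yielding $O(n^{3-2\gamma})$. Your explicit remark that starting the IVM algorithm on the empty database obviates any preprocessing hypothesis is a nice clarification that the paper leaves implicit.
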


\begin{proof}[Proof idea]
    Assume there exists Algorithm $\mathcal{A}$ that can maintain the triangle detection query $Q_b$ with (amortized)  update time $O(N^{1/2-\gamma})$ and enumeration delay  $O(N^{1-\gamma})$ for some $\gamma > 0$. We design an algorithm $\mathcal{B}$ that uses $\mathcal{A}$ to solve OuMv in subcubic time in $n$. However, this contradicts the OuMv conjecture. 

\paragraph{Construction} Algorithm $\mathcal{B}$ is as follows:
\begin{enumerate}
\item Construct relation $S$ to encode the matrix $\mathbf{M}$: $S(i,j)=\mathbf{M}[i,j]$ 
\item In each round $r \in [n]$:
\begin{enumerate}
\item Construct relation $R$ to encode the vector $\mathbf{u}_r$: $R(a,i) = \mathbf{u}_r[i]$, for some constant value $a$ and $\forall i\in[n]$.
\item Construct relation $T$ to encode the vector $\mathbf{v}_r$: $T(j,a) = \mathbf{v}_r[j]$ for the same constant $a$ and $\forall j\in[n]$.

\item Since by construction $\mathbf{u}_r^{\top} \mathbf{M} \mathbf{v}_r = Q_b$, we check whether $Q_b=1$ using Algorithm $\mathcal{A}$.
\end{enumerate}
\end{enumerate} 

We next exemplify this construction. Each tuple in $R$ records the column of a value 1 in $\mathbf{u}_r^{\top}$. Each tuple in $S$ records the (row, column) pair of each value 1 in $\mathbf{M}$. Also, each tuple in $T$ records the row of a value 1 in $\mathbf{v}_r$. The value $\mathbf{u}_r^{\top} \mathbf{M} \mathbf{v}_r$ is the same as $Q_b$.

\begin{center}
  \begin{tikzpicture}
    \node (myNode) at (-3,1) {$\mathbf{u}^{\top}$};
    \node (myNode) at (1,1) {$\mathbf{M}$};
    \node (myNode) at (4,1) {$\mathbf{v}$};
    \node (myNode) at (7,1) {$\mathbf{u}^{\top}\mathbf{M}\mathbf{v}$};

    \node (myNode) at (-3,0) {
    \setlength\arrayrulewidth{1pt}
    \begin{tabular}{|c|c|c|}
    \hline
    $0$ &  {\color{red}$1$} & $0$\\
    \hline
    \end{tabular}
    };
    
    \node (myNode) at (1,0) {
    \setlength\arrayrulewidth{1pt}
    \begin{tabular}{|c|c|c|}
    \hline
     $0$ & $1$ & $0$\\
    \hline
    {\color{red}$1$} & $0$ & $0$ \\
    \hline
    $0$ & $0$ & $1$\\
    \hline
    \end{tabular}
    };
    
    \node (myNode) at (4,0) {
    \setlength\arrayrulewidth{1pt}
    \begin{tabular}{|c|}
    \hline
    {\color{red}$1$} \\
    \hline
    $0$ \\
    \hline
    $0$ \\
    \hline
    \end{tabular}
    };
    
    \node (myNode) at (7,0) {
    \setlength\arrayrulewidth{1pt}
    \begin{tabular}{|c|}
    \hline
    $1$\\
    \hline
    \end{tabular}
    };

    \node (myNode) at (-3.2,-2.5) {
     \begin{tabular}{cc|c}
        \multicolumn{3}{c}{$R$} \\
        \toprule
        $A$ & $B$ &  val\\\midrule
        $a$ &{\color{red}$2$} & $1$\\
        \bottomrule
        \\
        \\
      \end{tabular}
    };

    \node (myNode) at (1,-2.5) {
     \begin{tabular}{cc|c}
        \multicolumn{3}{c}{$S$} \\
        \toprule
        $B$ & $C$ &  val \\\midrule
        {\color{red}$2$} & {\color{red}$1$} & $1$\\
        $1$ & $2$ & $1$\\
        $3$ & $3$ & $1$\\
        \bottomrule
      \end{tabular}
    };
    
    \node (myNode) at (4,-2.5) {
     \begin{tabular}{cc|c}
        \multicolumn{3}{c}{$T$} \\
        \toprule
        $C$ & $A$ &  val \\\midrule
       {\color{red}$1$} & $a$ & $1$\\
            \bottomrule
            \\
            \\
      \end{tabular}
    };
    
    \node (myNode) at (7,-2.1) {
     \begin{tabular}{c|c} 
        \multicolumn{2}{c}{$Q_b$} \\
        \toprule
        $\emptyset$  &  val \\\midrule
        $(\, )$ & $1$\\
            \bottomrule
      \end{tabular}
    };
    
    \end{tikzpicture}
\end{center}

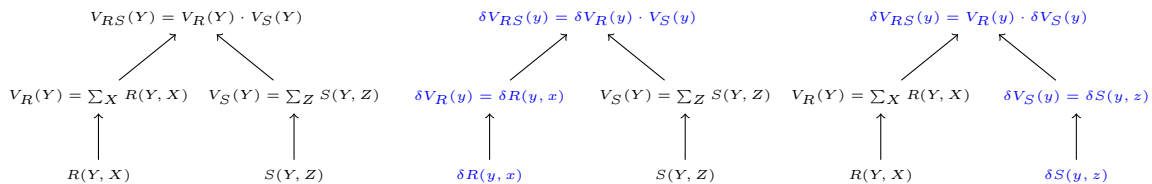
\begin{figure*}[t]
    \centering
   \begin{tikzpicture}[xscale=0.65, yscale=0.7]
    \begin{scope}
      \node at (-2, 0) (vrs) {\tiny $V_{RS}(Y) = V_R(Y)\cdot V_S(Y)$};
      \node at (-4, -1.5) (vr) {\tiny $V_R(Y) = \sum_X R(Y,X)$} edge[->] (vrs);
      \node at (0, -1.5) (vs) {\tiny $V_S(Y) = \sum_Z S(Y,Z)$} edge[->] (vrs);
      \node at (-4, -3.0) (r) {\tiny $R(Y,X)$} edge[->] (vr);
      \node at (0, -3.0) (s) {\tiny $S(Y,Z)$} edge[->] (vs);
  \end{scope}
    
  \begin{scope}
      \node at (6, 0) (vrs) {\tiny \color{blue} $\delta V_{RS}(y) = \delta V_R(y)\cdot V_S(y)$};
      \node at (4, -1.5) (vr) {\tiny \color{blue}$\delta V_R(y) = \delta R(y,x)$} edge[->] (vrs);
      \node at (8, -1.5) (vs) {\tiny $V_S(Y) = \sum_Z S(Y,Z)$} edge[->] (vrs);
      \node at (4, -3.0) (r) {\tiny\color{blue} $\delta R(y,x)$} edge[->] (vr);
      \node at (8, -3.0) (s) {\tiny $S(Y,Z)$} edge[->] (vs);
  \end{scope}

  \begin{scope}
      \node at (14, 0) (vrs) {\tiny\color{blue} $\delta V_{RS}(y) =  V_R(y)\cdot \delta V_S(y)$};
      \node at (12, -1.5) (vr) {\tiny $V_R(Y) = \sum_X R(Y,X)$} edge[->] (vrs);
      \node at (16, -1.5) (vs) {\tiny\color{blue} $\delta V_S(y) = \delta S(y,z)$} edge[->] (vrs);
      \node at (16, -3.0) (r) {\tiny\color{blue} $\delta S(y,z)$} edge[->] (vs);
      \node at (12, -3.0) (r) {\tiny $R(Y,X)$} edge[->] (vr);
  \end{scope}
    \end{tikzpicture}
    \caption{Delta view trees for the maintenance of the q-hierarchical query $Q(Y,X,Z) = R(Y,X) \cdot S(Y,Z)$.}
    \label{fig:q-hier-view-trees}
\end{figure*}

\paragraph{Complexity} Algorithm $\mathcal{B}$ is a sequence of inserts and deletes, which are executed using $\mathcal{A}$.  Overall, $\mathcal{B}$ constructs a database of size $N=O(n^2)$.
Step (1) needs less than $n^2$ inserts and an overall update time of $O(n^2 \cdot (n^2)^{\frac{1}{2}-\gamma}) = O(n^{3 - 2 \gamma})$.
Steps (2a) and (2b) need to first delete all tuples in $R$ and $T$  and then insert new tuples corresponding to the new vectors $\mathbf{u}_r[i]$ and $\mathbf{v}_r[j]$. This takes less than $4n$ updates and an overall update time of $O(4n \cdot (n^2)^{\frac{1}{2}-\gamma}) = O(n^{2 - 2 \gamma})$.
Step (2c) checks whether $Q_b=1$, which is equivalent to checking $\mathbf{u}_r^{\top}\mathbf{M}\mathbf{v}_r=1$. This takes time for one enumeration delay, so $O((n^2)^{1-\gamma}) = O(n^{2 - 2 \gamma})$.
The entire Step (2) needs $O(n(n^{2 - 2 \gamma}+n^{2 - 2 \gamma}))= O(n^{3 - 2 \gamma})$. The overall time for Algorithm $\mathcal{B}$  is thus $O(n^{3 - 2 \gamma}+n^{3 - 2 \gamma}) = O(n^{3 - 2 \gamma})$. This is subcubic in $n$ and contradicts the OuMv conjecture.
 \end{proof}

If we put together the $O(N^{1/2})$ update time and the $O(1)$ enumeration delay with the above lower bound, we conclude that the adaptive IVM approach is worst-case optimal for the triangle count query, conditioned on the OuMv conjecture.

\section{The Best Possible Maintenance}
\label{sec:q}

Notable progress has been made recently on understanding the best possible maintenance scenario: constant update time and constant enumeration delay. In this section, we overview recent works that take the structure of the queries, the database, or the update sequence into account to achieve this best possible maintenance.

\subsection{The Q-hierarchical Queries}

A seminal result is the precise syntactic characterization of all conjunctive queries without self-joins that admit the best possible maintenance for any database and update sequence:

\begin{thm}[\cite{BerkholzKS17}]\label{th:q-hierarchical}
    Let $Q$ be a self-join free conjunctive query and $\mathcal{D}$ a database  of size $N$. 
    \begin{itemize}
        \item If $Q$ is $q$-hierarchical, then it can be maintained with $O(N)$ preprocessing time, constant time per single-tuple update and constant enumeration delay.
        \item Otherwise, there is no algorithm that, for any $\gamma > 0$, maintains $Q$ with arbitrary preprocessing time, $O(N^{1/2-\gamma})$ update time, and $O(N^{1/2-\gamma})$ enumeration delay, unless the OuMv conjecture\footnote{The original statement uses the Online Matrix-Vector Multiplication (OMv) conjecture, which implies the OuMv conjecture~\cite{Henzinger:OMv:2015}.} fails.
    \end{itemize}
\end{thm}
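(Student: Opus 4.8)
The statement has two halves: a positive (algorithmic) half and a negative (conditional lower bound) half. I would organize the proof around these two halves and, in each, reduce the general self-join free conjunctive query to a normalized shape so that the combinatorial core becomes visible.

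\emph{Recalling the definition.} First I would make precise what $q$-hierarchical means. A self-join free CQ $Q$ is \emph{hierarchical} if for any two variables $X, Y$, the atom sets $\mathit{atoms}(X)$ and $\mathit{atoms}(Y)$ are either nested ($\mathit{atoms}(X)\subseteq\mathit{atoms}(Y)$ or $\mathit{atoms}(Y)\subseteq\mathit{atoms}(X)$) or disjoint; it is \emph{$q$-hierarchical} if, in addition, whenever $\mathit{atoms}(X)\supsetneq\mathit{atoms}(Y)$ and $Y$ is a free variable, then $X$ is free as well. The second condition says the free variables form a ``downward closed'' set in the containment forest of the atom sets. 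Having this in hand, the hierarchical structure induces a forest on the variables, and the plan is to exploit that forest.

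\emph{Positive direction.} For a $q$-hierarchical $Q$, I would build a \emph{view tree} (a tree of materialized aggregate views) following the variable forest, generalizing Fig.~\ref{fig:q-hier-view-trees}: at the leaves sit the input relations; an internal node corresponding to a variable $X$ stores a view obtained by joining its children and then, if $X$ is a bound variable, marginalizing $X$ via $\sum_X$. The $q$-hierarchical condition is exactly what guarantees each such view has a schema that is a \emph{subset} of the schema of some input atom below it, hence each view has size $O(N)$, and crucially that the free variables end up at the top so the output is read off the root without further joins. Preprocessing is the bottom-up materialization, $O(N)$ time. For a single-tuple update $\delta R$ to a leaf relation $R$, only the views on the root-to-leaf path change; by the delta rules \eqref{eq:delta-union}–\eqref{eq:delta-sum} each such $\delta$-view is a single-tuple (or constant-size) object, its computation is one hash-map lookup into the sibling view, and the update propagates in $O(1)$ amortized time up the path of constant length. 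For enumeration: the $q$-hierarchical property lets us order the free variables so that the root view, restricted along any partial assignment, still factors hierarchically; a standard constant-delay enumeration over this factorized representation (iterate the top free variable, recurse) yields constant delay. I would cite the constant-delay enumeration machinery for hierarchical/factorized relations rather than re-derive it.

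\emph{Negative direction.} Here the plan is a reduction from OuMv, extending the triangle argument already given in the excerpt. Suppose $Q$ is self-join free but \emph{not} $q$-hierarchical. Two cases. (i) $Q$ is not hierarchical: then there exist variables $X,Y$ and atoms witnessing the failure, i.e.\ an atom containing $X$ but not $Y$, one containing $Y$ but not $X$, and one containing both. Projecting all relations onto the relevant coordinates and freezing all other variables to a constant, $Q$ essentially contains the Boolean triangle/path pattern on $(X,Y)$, and one encodes $\mathbf M$ in the atom with both $X,Y$, $\mathbf u_r$ in the $X$-only atom, $\mathbf v_r$ in the $Y$-only atom — exactly the construction in the proof above — so the OuMv conjecture forbids update time and delay both $O(N^{1/2-\gamma})$. (ii) $Q$ is hierarchical but not $q$-hierarchical: then some bound variable $X$ has $\mathit{atoms}(X)\supsetneq\mathit{atoms}(Y)$ with $Y$ free. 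Now the difficulty is pushed into \emph{enumeration}: the output must range over $Y$-values, but whether a given $Y$-value survives depends on a join over the $X$-only atoms, which behaves like a matrix–vector product. One sets up an OuMv reduction where the rounds' vectors $\mathbf u_r,\mathbf v_r$ are loaded by updates (cheap, $O(n)$ of them) and the round's answer is recovered by checking, via one enumeration step, whether a designated output tuple is present — again giving subcubic OuMv if both bounds were $O(N^{1/2-\gamma})$.

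\emph{Main obstacle.} The algorithmic side is largely bookkeeping once the view-tree is set up, so I expect the real work to be the lower-bound dichotomy: cleanly extracting, from an arbitrary non-$q$-hierarchical self-join free $Q$, the precise ``hard pattern'' (case (i) vs.\ case (ii)) and arguing that freezing the irrelevant variables to constants does not inflate the database beyond $N=O(n^2)$ nor accidentally trivialize the query. In particular, case (ii) — where the obstruction is a free variable sitting \emph{below} a bound one — requires care: the reduction must route the OuMv answer through the enumeration interface rather than through a Boolean output, and one must check that the enumeration delay, not the update time, is what the hardness is charged to. Handling self-joins is explicitly out of scope here (the theorem assumes self-join freeness), which is what makes the ``freeze the other variables'' step sound.
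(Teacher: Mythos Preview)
The paper does not actually prove Theorem~\ref{th:q-hierarchical}; it is stated as a cited result from~\cite{BerkholzKS17}. For the positive half, the paper only works through one concrete instance (the query $Q(Y,X,Z)=R(Y,X)\cdot S(Y,Z)$ with the view trees of Fig.~\ref{fig:q-hier-view-trees}) and then defers the general view-tree construction to~\cite{DBLP:conf/sigmod/NikolicO18,FIVM:VLDBJ:2023}. For the negative half, the paper gives no argument specific to this theorem at all: the OuMv reduction written out in Section~\ref{sec:ivm-techniques} is for the Boolean \emph{triangle} query, a separate statement, and no reduction is supplied for an arbitrary non-$q$-hierarchical query.

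Your outline therefore goes well beyond what the paper offers and is, in essence, the proof strategy of the original source. Two small comments on your sketch. In case~(i), the witness to non-hierarchicality is the \emph{path} pattern $R(X)\cdot S(X,Y)\cdot T(Y)$ (cf.\ Ex.~\ref{ex:q-hierarchical}), not a triangle; your OuMv encoding (matrix in the $\{X,Y\}$-atom, vectors in the single-variable atoms) is correct for it, but it is a different instance than the cyclic triangle the paper treats, so do not conflate the two. In case~(ii), the minimal witness is $Q(A)=\sum_B R(A,B)\cdot S(B)$ (again Ex.~\ref{ex:q-hierarchical}); here the reduction must route the round answer through the enumeration interface as you say, and the conclusion is that update time and delay cannot \emph{both} be $O(N^{1/2-\gamma})$---your phrasing ``the enumeration delay, not the update time, is what the hardness is charged to'' slightly overstates it, since the actual trade-off (see Fig.~\ref{fig:non-q-lower-bound}) allows either one to be small at the expense of the other.
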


The $q$-hierarchical queries form a strict subclass of the class of free-connex $\alpha$-acyclic queries. Their following syntactic definition allows to check whether a query is $q$-hierarchical in polynomial time in the size of the query.

\begin{definition}[\cite{2011Suciu,BerkholzKS17}]\em
A conjunctive query is {\em hierarchical} if for any two variables $X$ and $Y$, it holds $\mathit{atoms}(X) \subseteq \mathit{atoms}(Y)$, $\mathit{atoms}(Y) \subseteq \mathit{atoms}(X)$, or $\mathit{atoms}(X) \cap \mathit{atoms}(Y) = \emptyset$. 

A hierarchical query is {\em $q$-hierarchical} if for any two variables $X$ and $Y$ with 
$\mathit{atoms}(X) \supset \mathit{atoms}(Y)$, it holds: if $Y$ is free, then $X$ is free.     
\end{definition}

\begin{ex}\label{ex:q-hierarchical}\em 
    A simple non-hierarchical query is: 
    \begin{align*}
        Q = \sum_{X,Y} R(X)\cdot S(X,Y)\cdot T(Y).
    \end{align*}
    The query becomes hierarchical if we drop any of the atoms in $Q$. A simple hierarchical but not $q$-hierarchical query is: 
    \begin{align*}
        Q(X) = \sum_{Y} R(X,Y)\cdot S(Y).
    \end{align*}\end{ex}

To achieve the desired maintenance complexity, a new technique is needed: factorized representation~\cite{Olteanu:ICDT:12, OlteanuZ15}. This is a lossless compression of the query output that allows efficient maintenance and constant delay enumeration of the tuples in the query output. 

\begin{ex}\em 
    We explain the role of factorization for the maintenance of the $q$-hierarchical query $Q(Y,X,Z) = R(Y,X)\cdot S(Y,Z)$, where the relations have size $O(N)$. Consider a single-tuple update $\delta R=\{(y,x)\mapsto m\}$. This may trigger $O(N)$ changes to the query output in case $y$ is paired with $O(N)$ $Z$-values in $S$. 

    Consider the factorization of the query output depicted by the hierarchy of views in Fig.~\ref{fig:q-hier-view-trees} (left): for each $Y$-value $y$ in both $R$ and $S$, we store separately the $X$-values paired with $y$ in $R$ from the $Z$-values paired with $y$ in $S$. This ensemble of views is a view tree.
    
    We can use the view tree to reduce the update time from linear to constant. To process an update $\delta R$, we transform the view tree into the delta view tree from Fig.~\ref{fig:q-hier-view-trees}(middle). We propagate $\delta R$ bottom-up in this delta view tree and compute the delta views along the path to the root. Each of these delta views can be computed in constant time: $\delta V_R$ projects away $x$ from $\delta R$ and $\delta V_{RS}$ requires one lookup into $V_S$ with the key $y$. Similarly, we obtain constant update time for a single-tuple update to $S$ using the delta view tree from Fig.~\ref{fig:q-hier-view-trees}(right).
    Each delta view is then added to its corresponding view. 

    The query output is not contained in one view, instead it is distributed over the views. The view tree is calibrated top-down by its construction and maintenance: The $Y$-values in $V_{RS}$ are in both $R$ and $S$.
    To enumerate the tuples in the query output, we proceed as follows. For each $Y$-value in $V_{RS}$, we look up into $R(y,X)$ and $S(y,Z)$ to enumerate all possible pairs $(X,Z)$ paired with $y$. The payload of an output tuple $(y,x,z)$ is the product of the payloads of $R(y,x)$ and $S(y,z)$. Positioning the iterators at the beginning of each $y$-block in $R$ and $S$, as well as at the beginning of $V_{RS}$ takes constant time and each lookup in these views also takes constant time. We therefore achieve constant enumeration delay.
\end{ex}

\begin{figure}[t]
  \centering   
  \includegraphics[width=0.45\textwidth]{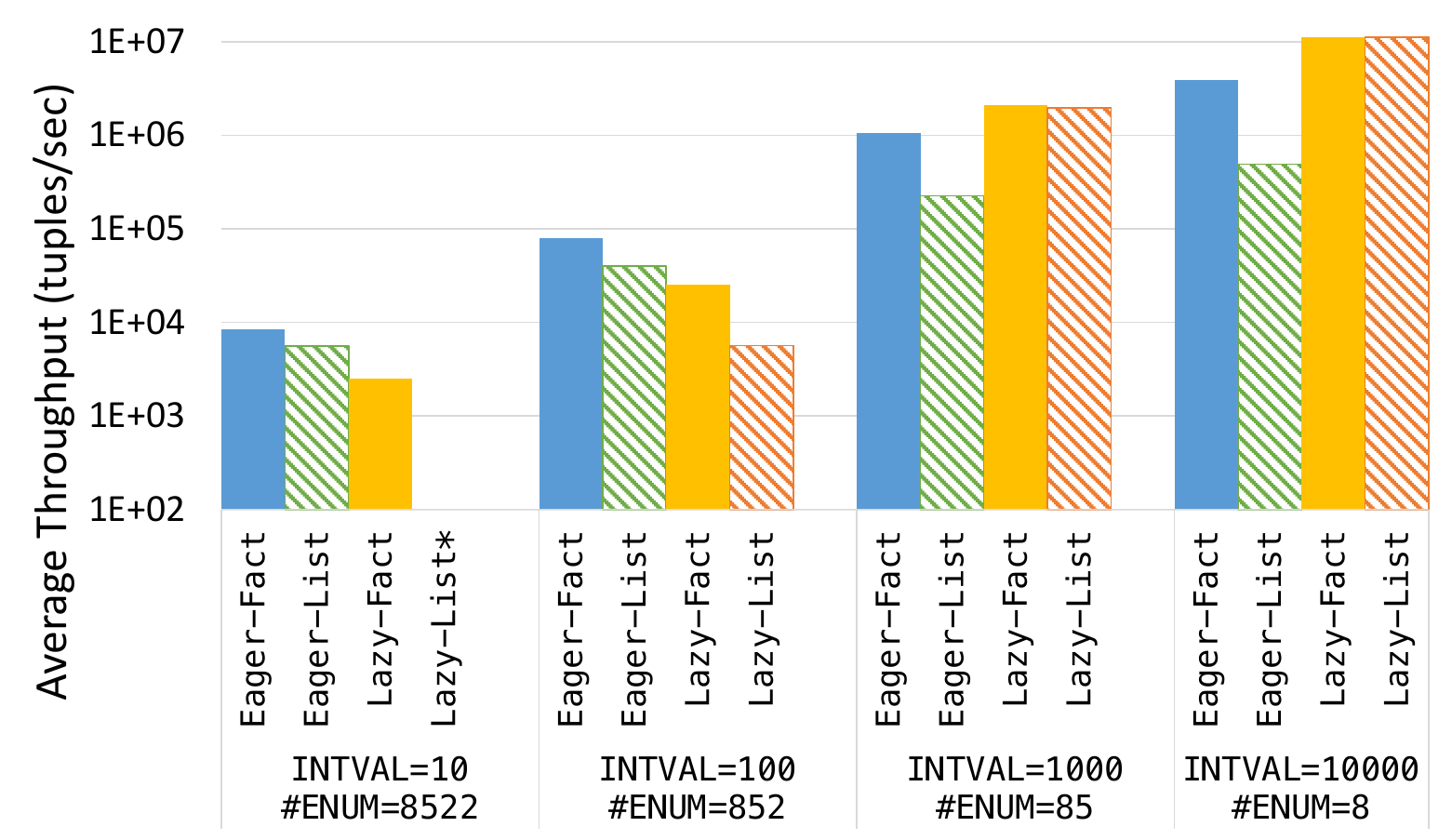}
  \caption{
  \cite{FIVM:VLDBJ:2023}
  Incremental maintenance of the result of a $q$-hierarchical query over the   {\em Retailer} dataset under update batches and requests to enumerate all tuples in the query output after every \texttt{INTVAL} update batches; a batch has 1000 single-tuple inserts; \texttt{\#ENUM} denotes the overall number of the enumeration requests. The throughput is not shown for Lazy-list* for \texttt{INTVAL}=10 as it did not finish within the 50 hours.
  }
  \label{fig:q-hier-experiment}
\end{figure}

Given a $q$-hierarchical query $Q$ and a database $\calD$, we can construct in time polynomial in the size of $Q$ and linear in the size of $\calD$ a view tree that supports the best possible maintenance for $Q$ over updates to $\calD$~\cite{DBLP:conf/sigmod/NikolicO18, FIVM:VLDBJ:2023}. 
Maintenance using such view trees can be arbitrarily faster than using other view trees or even other approaches. Dynamic Yannakakis~\cite{Idris:dynamic:SIGMOD:2017}, F-IVM~\cite{DBLP:conf/sigmod/NikolicO18, FIVM:VLDBJ:2023},  Crown~\cite{DBLP:journals/pvldb/WangHDY23}, and the RelationalAI engine provide varying support for $q$-hierarchical queries. They all show significant speedups over classical IVM approaches.

Fig.~\ref{fig:q-hier-experiment} shows the throughput of four IVM approaches for a $q$-hierarchical query that joins five relations: eager-list (DBToaster \cite{DBT:VLDBJ:2014}), eager-fact (F-IVM~\cite{FIVM:VLDBJ:2023}), lazy-list (delta queries), and lazy-fact (a hybrid of F-IVM and delta queries). They are defined on two dimensions: (1) we either propagate the updates as far as possible in the view tree (eager) or we only update the input relations and then construct the query output on an enumeration request (lazy); (2) the query output is either maintained as a list of tuples (list) in the payload at the top view or as a factorized representation in the payload of several views (fact). They all use the same view tree, which is their best view tree. The factorization approaches outperform the others in all but the trivial cases of few enumeration requests, where the representation of the query output plays no role. This is as expected, since the enumeration delay and the update time can both remain constant for $q$-hierarchical queries only if the query output is kept factorized over the views in the view tree.

The quest for the best possible maintenance goes beyond the $q$-hierarchical queries. Many queries are not hierarchical, e.g., a path join with more than two relations, yet they may enjoy the best possible maintenance due to: the structure of the query workload, data integrity constraints, or restrictions of the update batch.

\subsection{Cascading Q-hierarchical Queries}
    
When maintaining a set of queries, there are opportunities to reuse the maintenance of a query for the maintenance of other queries. In particular, we can piggyback the maintenance of a non-$q$-hierarchical query $Q_1$ on the maintenance of a $q$-hierarchical query $Q_2$ so that the set of two queries can be maintained with constant update time and delay. This requires a $q$-hierarchical rewriting of $Q_1$ using $Q_2$~\cite{DBLP:journals/lmcs/GeckKSS23}. We achieve (amortized) constant update time and constant enumeration delay for both $Q_1$ and $Q_2$ under two conditions: (i) we enumerate the outputs of both $Q_1$ and $Q_2$; and (ii) the enumeration for $Q_2$ is triggered before the enumeration for $Q_1$.

\begin{ex}\em 
Consider the following two queries:
\begin{align*}
    Q_1(A,B,C,D) &= R(A,B) \cdot S(B,C) \cdot T(C,D)\\
    Q_2(A,B,C)   &= R(A,B) \cdot S(B,C)
\end{align*}
Query $Q_2$ is $q$-hierarchical, while $Q_1$ is not. That is, when taken in isolation, $Q_2$ can be maintained with constant update time and delay, while $Q_1$ cannot. Yet we can rewrite $Q_1$ to use $Q_2$, since there is a trivial homomorphism (the identity function) from $Q_2$ to $Q_1$:
\begin{align*}
    Q'_1(A,B,C,D) &= Q_2(A,B,C) \cdot T(C,D)
\end{align*}
The rewriting $Q'_1$ is equivalent to $Q_1$ and $q$-hierarchical.

Fig.~\ref{fig:cascading-example} shows a view tree used for the maintenance of both $Q_2$ and $Q_1$. Since $Q_2$ is $q$-hierarchical, its view tree (depicted in the bottom box) can maintain it with constant update time and enumeration delay. Updates to $R$ and $S$ are propagated up to the view $V_{RS}$, while updates to $T$ are propagated up to the view $V_{RST}$. 

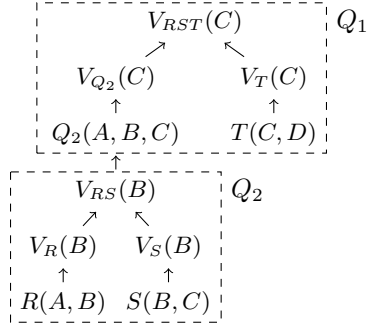
\begin{figure}
\begin{center}
\begin{tikzpicture}[xscale=0.7, yscale=0.5]

    \node at (3.5, 4.5) { $Q_1$};
    \node at (1.5, -0) { $Q_2$};

    \draw [dashed] (-2.5,1.0) rectangle (3.0,5.0);

    \node at (0.5, 4.5) (vrst) {\small $V_{RST}(C)$};
    \node at (2, 3.0) (vt) {\small $V_T(C)$} edge [->] (vrst);
    \node at (2, 1.5) (t) {\small $T(C,D)$} edge [->] (vt);
    
    \node at (-1, 3.0) (vq2) {\small $V_{Q_2}(C)$} edge[->] (vrst);
    \node at (-1, 1.5) (q2) {\small $Q_2(A,B,C)$} edge[->] (vq2);

    \draw [dashed] (-3,0.5) rectangle (1,-3.5);
    
    \node at (-1, 0) (vrs) {\small $V_{RS}(B)$} edge[->] (q2);
    \node at (-2, -1.5) (vr) {\small $V_R(B)$} edge[->] (vrs);
    \node at (0, -1.5) (vs) {\small $V_S(B)$} edge[->] (vrs);
    \node at (-2, -3.0) (r) {\small $R(A,B)$} edge[->] (vr);
    \node at (0, -3.0) (s) {\small $S(B,C)$} edge[->] (vs);
    
\end{tikzpicture}
\end{center}
    \caption{View tree for the maintenance of two cascading $q$-hierarchical queries: The non-hierarchical query $Q_1$ uses the maintenance of the $q$-hierarchical query $Q_2$. }
    \label{fig:cascading-example}
\end{figure}

An enumeration request for $Q_2$ is resolved as for $q$-hierarchical queries. Yet an enumeration request for $Q_1$ cannot be resolved before enumerating the output of $Q_2$! The reason is that we use the enumeration of the output tuples of $Q_2$ to piggyback (or: asymptotically cover the cost of) the propagation of such tuples up the view tree through the view $V_{Q_2}$ to the top view $V_{RST}$. There is a constant overhead for each such propagation. Once the enumeration for $Q_2$ has finished, the propagation of $Q_2$'s tuples have updated the views $V_{Q_2}$ and $V_{RST}$ and we can enumerate the distinct tuples in the output of $Q_1$ at any time.
Experiments~\cite{cavier_thesis} show that $Q'_1$ can have a higher throughput than $Q_1$, assuming we update both queries and enumerate the output of $Q_2$ before the output of $Q'_1$.
\end{ex}

A treatment of theoretical and practical aspects of cascading $q$-hierarchical queries is given in recent work~\cite{cavier_thesis}.

\subsection{Queries with Free Access Patterns}

Access to data may come with restrictions. For instance, to access the flights from a flight booking database behind a web interface, one has to specify the date, departure, and destination. Built-in predicates are also a common source of access restriction: For instance, given a user-defined function $f(a,b,c)$ that maps $(a,b)$ to $c$, an output value for $c$ can be obtained when input values for $a$ and $b$ are given. Such access restrictions can be made explicit in the query by means of a separation of the free variables into input and output: The query returns tuples over the output variables given a tuple of values over the input variables. Recent work~\cite{DBLP:conf/icdt/00020OZ23,kara2024conjunctive} considers the problem of maintaining such conjunctive queries with free access patterns (CQAPs) and introduces the class of tractable CQAPs, which are precisely those CQAPs that admit constant update time and constant enumeration delay. The $q$-hierarchical queries are the class of tractable CQAPs without input variables. Yet there are tractable CQAPs that are not $q$-hierarchical and not even acyclic. 

\begin{ex}[\cite{kara2024conjunctive}]\em
The following triangle detection problem is a tractable CQAP:
 Given three nodes in a graph, do they form a triangle? 
 This problem can be expressed as:  
 \begin{align*}
 Q(\cdot | A,B,C) = E(A,B) \cdot E(B,C) \cdot E(C,A), 
 \end{align*}
 where $E$ is the edge relation of the graph, The variables in the query head that appear after the symbol ``$|$'', i.e., $A,B,C$, are input variables. This query has no output variable (expressed by $\cdot$). 
 The following edge triangle listing problem is not a tractable CQAP (albeit it can be maintained optimally): Given an edge in a graph, the task is to list all triangles containing this edge: 
\begin{align*}
    Q(C | A,B) = E(A,B) \cdot E(B,C) \cdot  E(C,A).
\end{align*}
 A further example of a tractable CQAP is: 
 \begin{align*}
    Q(A | B) = S(A, B) \cdot T (B),
 \end{align*}
 which outputs $A$-values for any given input $B$-value.
\end{ex}

All tractable CQAPs can be characterized syntactically much in the spirit of the $q$-hierarchical characterization.
\begin{definition}[\cite{DBLP:conf/icdt/00020OZ23}]\em
    The {\em fracture} of a CQAP $Q$ is a CQAP $Q_\dagger$ constructed as follows. We start with $Q_\dagger$ as a copy of $Q$. We replace each occurrence of an input variable by a fresh variable. Then, we compute the connected components of the modified query. Finally, we replace in each connected component of the modified query all new variables originating from the same input variable by one fresh input variable.

    Given two variables $A$ and $B$ in $Q$, we say that $B$ {\em dominates} $A$ if $\mathit{atoms}(A) \subset \mathit{atoms}(B)$. The query $Q$ is {\em free-dominant} \footnote{The properties $q$ and free-dominant are the same.} ({\em input-dominant}) if for any two variables $A$ and $B$, it holds: if $A$ is free (input) and $B$ dominates $A$, then $B$ is free (input).

    The CQAP $Q$ is {\em tractable} if its fracture is hierarchical, free-dominant, and input-dominant.
\end{definition}

We obtain a dichotomy for CQAPs like for $q$-hierarchical queries:

\begin{thm}[\cite{DBLP:conf/icdt/00020OZ23}]\label{thm:cqap}
	Let any CQAP query $Q$ and database $\mathcal{D}$.
	\begin{itemize}
	\item If $Q$ is tractable, then it can be maintained with   
	$O(|\mathcal{D}|)$ preprocessing time,
	constant time per single-tuple update and constant enumeration delay.
	\item If $Q$ is not tractable and has no repeating relation symbols,
		then there is no  algorithm that maintains $Q$ with arbitrary preprocessing  time, 
		$\bigO{|\mathcal{D}|^{\frac{1}{2} - \gamma}}$ 
		 update time and  enumeration delay,
		for any $\gamma >0$, unless the OuMv conjecture fails. 
	\end{itemize}
\end{thm}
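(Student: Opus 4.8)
The plan is to prove the two directions separately, each mirroring its counterpart for $q$-hierarchical queries: the positive statement follows the factorized view-tree construction behind Theorem~\ref{th:q-hierarchical}, adapted to respect the access pattern via the fracture; the negative statement follows the OuMv reduction used for the triangle lower bound, with a case analysis over the three ways the fracture can fail to be tractable.

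\emph{Upper bound.} I would first reduce the maintenance of $Q$ to that of its fracture $Q_\dagger$: given an input tuple for $Q$, duplicate each value into the fresh per-component copies of the corresponding input variable to obtain an input tuple for $Q_\dagger$, and observe that, once the input is fixed, the components of $Q_\dagger$ share no variable, so the output of $Q$ is the Cartesian product of the component outputs. Hence constant-delay enumeration per component yields constant-delay enumeration for $Q$, and it suffices to maintain a single connected, hierarchical, free-dominant, input-dominant CQAP. For such a query, the hierarchical property provides a rooted variable order in which the variables of every atom form a root-to-node path; free-dominance forces every variable strictly above a free variable to be free (so bound variables sit at the bottom of each branch), and input-dominance forces every variable strictly above an input variable to be input (so input variables sit at the top). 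Along each branch the order is thus stratified as input, then output, then bound. From this order I would build the factorized view tree exactly as in the $q$-hierarchical case: $O(|\mathcal{D}|)$ preprocessing to materialize it; constant update time, because a single-tuple update propagates along a constant-length root path and at each view touches one key, combined with sibling views by a lookup; and constant enumeration delay, because the given input tuple's values lie on the top portion of the tree, let us descend in constant time to the subtree over the output variables, which (being calibrated top-down) is enumerated with constant delay just as for $q$-hierarchical queries.

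\emph{Lower bound.} Assume $Q$ is self-join free and not tractable, so $Q_\dagger$ fails to be hierarchical, or fails to be free-dominant, or fails to be input-dominant; in each case the failure is witnessed by a pair of variables in one connected component. I would set all other components' relations, and every atom not involved in the witness, to singletons over fresh constants --- this is where self-join-freeness is essential, and it keeps $|\mathcal{D}|=O(n^2)$ while making the output of $Q$ coincide with that of the witnessing sub-pattern. Then: if $Q_\dagger$ is not hierarchical, a pair $X,Y$ with incomparable, intersecting atom sets yields the familiar shape $R(X,\ldots),S(X,Y,\ldots),T(Y,\ldots)$ into which one encodes $\mathbf{u}_r,\mathbf{M},\mathbf{v}_r$ as in the triangle reduction; if $Q_\dagger$ is hierarchical but not free-dominant, a free $A$ dominated by a bound $B$ lets one put $\mathbf{M}$ in an atom over $\{A,B,\ldots\}$ and $\mathbf{v}_r$ in an atom containing $B$ but not $A$, so the enumerated $A$-values of $Q$ are exactly the support of $\mathbf{M}\mathbf{v}_r$ and $\mathbf{u}_r^{\top}\mathbf{M}\mathbf{v}_r$ is read off by scanning them against a hash set for $\mathbf{u}_r$; and if $Q_\dagger$ is hierarchical and free-dominant but not input-dominant, an input $A$ dominated by a non-input $B$ lets one again put $\mathbf{M}$ over $\{A,B,\ldots\}$ and $\mathbf{v}_r$ over a $B$-containing atom and answer the round by issuing one enumeration request per $i$ in the support of $\mathbf{u}_r$, with input $A=i$, and testing non-emptiness. (The fracture is exactly what rules out the tractable pattern $R(A)\cdot S(A,B)\cdot T(B)$ with $A,B$ input, so a witness never forces quadratically many probes.) In every case a round costs $O(n)$ updates plus $O(n)$ enumeration probes, each of cost $O(|\mathcal{D}|^{1/2-\gamma})=O(n^{1-2\gamma})$, for $O(n^{2-2\gamma})$ per round and $O(n^{3-2\gamma})$ overall, contradicting the OuMv conjecture.

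\emph{Main obstacles.} On the positive side, the one genuinely new point beyond Theorem~\ref{th:q-hierarchical} is verifying that the stratified view tree supports constant-delay enumeration when the starting node is determined by the input tuple --- i.e.\ that top-down calibration interacts correctly with the access pattern --- and that updates remain constant despite the presence of input variables. On the negative side, the work is in the bookkeeping of the case analysis: choosing the witness pair so that padding the remaining atoms with singletons neither changes the answer nor pushes $|\mathcal{D}|$ past $O(n^2)$, transferring a witness found in $Q_\dagger$ back to a reduction on $Q$ itself, and checking in each sub-case that the number of probes per round stays linear in $n$. It is also worth recording that the self-join-freeness hypothesis is not cosmetic: the edge-triangle-listing query is not a tractable CQAP yet is maintainable optimally, so with repeating relation symbols the dichotomy line moves.
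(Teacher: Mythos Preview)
The paper does not actually prove Theorem~\ref{thm:cqap}: it is a survey, and the theorem is stated with a citation to~\cite{DBLP:conf/icdt/00020OZ23} but no proof or proof sketch is given in the text. There is therefore nothing in the paper to compare your proposal against.

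That said, your sketch is the natural one and mirrors what the paper \emph{does} prove in detail for the related results: the upper bound follows the factorized view-tree machinery the paper develops around Theorem~\ref{th:q-hierarchical} and Fig.~\ref{fig:q-hier-view-trees}, and the lower bound follows the OuMv reduction template the paper spells out for the triangle query. Your stratification argument (input variables on top, then output, then bound, forced by input- and free-dominance) and your three-way case split on the failure mode of the fracture are exactly the shape one expects from the cited source. One small point worth tightening: in the not-input-dominant case you probe once per index in the support of $\mathbf{u}_r$, which is $O(n)$ probes per round; you should confirm that each probe costs only one enumeration-delay unit (non-emptiness check) rather than a full enumeration, but you already state this correctly.
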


\subsection{Data Integrity Constraints}

In practice, databases satisfy integrity constraints. An interesting observation is that non-hierarchical queries may behave like hierarchical queries when computed over databases satisfying integrity constraints. We discuss next two such cases: functional dependencies and primary key - foreign key constraints.

\paragraph{Functional Dependencies}
A study\footnote{\url{http://www.cs.ox.ac.uk/people/dan.olteanu/papers/icde09queries.html}} of the 22 TPC-H queries shows that eight Boolean versions and 13 non-Boolean versions of these queries are hierarchical. When considering the functional dependencies that hold on the TPC-H dataset, four more Boolean and four more non-Boolean queries become hierarchical~\cite{DBLP:conf/icde/OlteanuHK09}. Also in real-world projects, functional dependencies can turn a large fraction of the query workload into hierarchical queries: e.g., 76\% of approximately 6000 queries become $q$-hierarchical under functional dependencies in one such project at RelationalAI. Functional dependencies also appear due to built-in predicates, e.g., we can derive from $A+B=C$ that any pair of variables implies the third variable.

\nop{The following is adapted from prior work~\cite{DBLP:conf/icde/OlteanuHK09,FIVM:VLDBJ:2023}.}
Given a set $\Sigma$ of functional dependencies (fds), let $\textsf{C}_\Sigma(S)$ denote the closure of the set $S$ of variables under $\Sigma$. For instance, given the fds $\Sigma=\{A\rightarrow C;BC \rightarrow D\}$, it holds $\textsf{C}_\Sigma(\{A,B\})=\{A,B,C,D\}$.

\begin{definition}[\cite{DBLP:conf/icde/OlteanuHK09}]\em
  Given a set $\Sigma$ of fds and a query 
\begin{align*}
  Q(X_1,\ldots,X_f) = \sum_{X_{f+1}} \ldots \sum_{X_{m}} \prod_{i \in [n]} R_i(\mathcal{S}_i),  
\end{align*}
the \emph{$\Sigma$-reduct} of $Q$ under $\Sigma$ is:
\begin{align*}
  Q(\textsf{C}_\Sigma(X_1,\ldots,X_f)) = \sum_{X_{f+1}} \ldots \sum_{X_{m}} \prod_{i \in [n]} R_i(\textsf{C}_\Sigma(\mathcal{S}_i)),  
\end{align*}
\end{definition}
The $\Sigma$-reduct is an extension of the query such that the schema of each relation consists of all variables in its closure under $\Sigma$.  

\begin{ex}\label{ex:hierarchical-fds}\em
The query used in the experiment in Fig.~\ref{fig:q-hier-experiment} is not $q$-hierarchical, but becomes $q$-hierarchical in the presence of fds. It is the join of five relations in a real-world Retailer dataset. We show below four of them and replace the non-join variables by dots:

\begin{tabular}{llllll}
\texttt{Inventory}: & locn, & dateid, & ksn, &       & $\ldots$ \\
\texttt{Weather}:   & locn, & dateid, &      &       & $\ldots$ \\
\texttt{Location}:  & locn, &         &      & zip,  & $\ldots$\\
\texttt{Census}:    &       &         &      & zip,  & $\ldots$ \\
\end{tabular}

This query is not hierarchical, since the sets of atoms of the variables zip and locn intersect, but one set is not included in the other. Thanks to the fd zip $\rightarrow$ locn, the last relation becomes Census(locn, zip, $\ldots$), so the reduct of the join query is hierarchical.
\end{ex}

When this rewriting using fds turns a non-$q$-hierarchical query into a $q$-hierarchical $\Sigma$-reduct, we can use the latter to derive a maintenance mechanism for the former. In particular, we use the view tree for the $q$-hierarchical $\Sigma$-reduct to construct a view tree for the original query that ensures the desirable best maintenance.

\begin{thm}[\cite{FIVM:VLDBJ:2023}]
  Let a query $Q$ and a database $\mathcal{D}$ that satisfies a set $\Sigma$ of functional dependencies. If the  $\Sigma$-reduct of $Q$ is $q$-hierarchical, then $Q$ can be maintained with $O(|\mathcal{D}|)$ preprocessing, constant time per single-tuple update and constant enumeration delay.
\end{thm}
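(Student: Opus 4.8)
The plan is to reduce to the first bullet of Theorem~\ref{th:q-hierarchical}: starting from the view tree that maintains the $q$-hierarchical $\Sigma$-reduct $Q_\Sigma$, build a view tree that maintains the original query $Q$ on databases $\mathcal{D}\models\Sigma$ while paying only a constant overhead.

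The first step is to pin down the correspondence between $Q$ and $Q_\Sigma$ over $\mathcal{D}\models\Sigma$. For each atom $R_i(\mathcal{S}_i)$ define its \emph{completion} $R_i^\Sigma$ over schema $\textsf{C}_\Sigma(\mathcal{S}_i)$: since $\mathcal{D}\models\Sigma$, every tuple of $R_i$ extends to exactly one tuple over $\textsf{C}_\Sigma(\mathcal{S}_i)$ by chasing $\Sigma$, so $R_i^\Sigma = R_i\cdot \prod_{Z\to Y}\textsf{D}_{Z\to Y}$, where $\textsf{D}_{Z\to Y}$ ranges over the fds of a canonical cover of $\Sigma$ whose left-hand side is already available, and $\textsf{D}_{Z\to Y}$ is the dictionary view recording the function $Z\mapsto Y$ (obtained as a projection of whichever atom witnesses the fd, with unit payloads; it is a genuine function precisely because $\mathcal{D}\models\Sigma$). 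One then checks the routine facts: $|R_i^\Sigma|=|R_i|$ and the extended tuple inherits the payload of $R_i$; and $Q_\Sigma$ evaluated on $\{R_i^\Sigma\}_{i\in[n]}$ returns exactly the tuples of $Q(\mathcal{D})$, each padded with the fd-determined free variables $\textsf{C}_\Sigma(X_1,\dots,X_f)\setminus\{X_1,\dots,X_f\}$. Hence projecting away the padding is a bijection between the two outputs that preserves constant enumeration delay.

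The second step is the view-tree construction. Take the view tree $\mathcal{T}_\Sigma$ for $Q_\Sigma$ furnished by Theorem~\ref{th:q-hierarchical}, with $O(|\mathcal{D}|)$ preprocessing, $O(1)$ update on each leaf $R_i^\Sigma$, and $O(1)$ enumeration delay; below each such leaf graft a constant-size gadget that derives $R_i^\Sigma$ from $R_i$ and the relevant dictionaries, adding the missing variables in the topological order of the chase. Preprocessing stays $O(|\mathcal{D}|)$ (dictionaries and completions are produced by a single linear scan), and enumeration stays $O(1)$ delay by the bijection above.

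The crux, and the step I expect to be the main obstacle, is the $O(1)$ update bound, specifically maintaining the completions $R_i^\Sigma$ under single-tuple updates. An update $\delta R_i$ is unproblematic: it becomes a single-tuple delta of $R_i^\Sigma$ after a constant number of dictionary lookups, which $\mathcal{T}_\Sigma$ absorbs in $O(1)$. The danger is an update to an atom that feeds a dictionary $\textsf{D}_{Z\to Y}$: realized as the join $R_i^\Sigma = R_i\cdot\textsf{D}_{Z\to Y}$, a newly inserted pair $(z,y)$ would force recompletion of every $R_i$-tuple with $Z$-value $z$, a potential $\Omega(|\mathcal{D}|)$ fan-out (note that $Z$ need not functionally determine all of $\mathcal{S}_i$, so there can be many such tuples). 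Making the update cost constant therefore requires either a lazy, amortized maintenance of the completions that charges each newly enabled extended tuple to the update enabling it, or --- more likely the route taken here --- a careful placement of the completion joins inside $\mathcal{T}_\Sigma$ that exploits the domination conditions of the $q$-hierarchical reduct (which constrain where the added variables may sit relative to $Z$ and $Y$) so that such a full delta is never materialized. Pinning down this placement and verifying that it keeps all three costs at their claimed bounds is the technical heart of the proof, and is the precise point where $q$-hierarchical-ness of $Q_\Sigma$, rather than mere acyclicity, is essential.
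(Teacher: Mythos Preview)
The paper does not give a full proof of this theorem (it is cited to~\cite{FIVM:VLDBJ:2023}), but it illustrates the intended construction in Example~\ref{ex:fds} and Fig.~\ref{fig:q-hier-fd-view-trees}, and that construction differs from your primary plan in one essential respect: it never materializes the completed relations $R_i^\Sigma$ at all. The view tree has the \emph{original} relations $R_i(\mathcal S_i)$ at the leaves; only its \emph{shape} (the variable order and the schemas of the internal views) is borrowed from the $q$-hierarchical $\Sigma$-reduct. The functional dependencies are then invoked not at a separate completion layer but at each sibling join during update propagation: when the delta $\delta V(x)$ reaches a node whose sibling is, say, $S(X,Y)$, the fd $X\to Y$ guarantees that the lookup $S(x,Y)$ returns at most one tuple, so the delta stays a singleton all the way to the root. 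In the example this is exactly why $\delta V'_X$ and $\delta V'_Z$ have constant size.

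This neatly dissolves the obstacle you correctly flagged. Your worry---that an update to a dictionary source $\textsf D_{Z\to Y}$ may force recompleting $\Omega(|\mathcal D|)$ tuples of some $R_i$---is an artifact of materializing $R_i^\Sigma$ as a separate object below the leaf. In the paper's construction there is no such object: the ``dictionary'' is just the sibling relation already present in the tree, the join with it is the ordinary sibling join of the $q$-hierarchical maintenance procedure, and the fd turns what would be a linear fan-out into a single lookup. Your second suggested route (``a careful placement of the completion joins inside $\mathcal T_\Sigma$'') is therefore the right instinct, but carried further than you indicate: the completion joins are not placed \emph{inside} the tree as extra gadgets, they are \emph{identified with} the existing sibling joins. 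Once you adopt this viewpoint, the constant update bound follows directly from the $q$-hierarchical structure of the reduct together with the degree-one guarantee supplied by each fd, and no amortization argument is needed.
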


\begin{ex}[\cite{FIVM:VLDBJ:2023}]\label{ex:fds}\em
    (This example is inspired from a RelationalAI workload, where we changed the names of relations and variables.)
    We would like to maintain the non-hierarchical query:
    \begin{align*}
        Q(Z,Y,X,W) =  R(X,W) \cdot S(X,Y) \cdot  T(Y,Z)
    \end{align*}
    over a database that satisfies the fds $\Sigma=\{X\rightarrow Y, Y\rightarrow Z\}$. 
    
    The $\Sigma$-reduct of $Q$ is $q$-hierarchical:
    \begin{align*}
      Q'(Z,Y,X,W) = R'(X,Y,Z,W) \cdot S'(X,Y,Z) \cdot T'(Y,Z)
    \end{align*}
    Fig.~\ref{fig:q-hier-fd-view-trees} shows a view tree for $Q$ (the view for $Q'$ is the same, the only change is in the signature of the relations at the leaves). It also shows the delta view tree for an update $\delta R$. We can achieve constant time per single-tuple update using this maintenance strategy as follows. The view $\delta V_W$ projects away $w$ from $\delta R$. To compute $\delta V'_X$, we look up into $S$ to fetch all $Y$-values paired with value $x$. Since $X\rightarrow Y$, we know there is at most one such $Y$-value, call it $y$. The delta view $\delta V_X$ projects away $x$ from $\delta V'_X$. To compute $\delta V'_Z$, we look up into $T$ to fetch all $Z$-values paired with $y$. This time, the second functional dependency $Y\rightarrow Z$ comes to the rescue: There is at most one $Z$-value, call it $z$, paired with $y$ in $T$. Each delta view thus has constant size and takes constant time to compute and add to their corresponding views.
    To enumerate the distinct tuples $(z,y,x,w)$ in the query output, we use again the observation that the view tree is calibrated top-down by construction, so each tuple in a view joins with at least one tuple in the views below in the view tree. This ensures constant enumeration delay.
\end{ex}

\begin{figure}
    \centering

    \begin{tikzpicture}[xscale=0.8, yscale=0.6]
        \node at (-2.0, -4.0) (vyz) {\small $V'_Z(Z,Y)$};
        \node at (-2.0, -6.0) (vy) {\small $V_X(Y)$} edge[-] (vyz);
        \node at (-2, -8.0) (vxy) {\small $V'_X(Y,X)$} edge[-] (vy);
        \node at (-2, -10.0) (vx) {\small $V_W(X)$} edge[-] (vxy);
        \node at (-2.0, -12.0) (f) {\small  $R(X,W)$} edge[-] (vx);
        \node at (0.25, -12.0) (g) {\small $S(X,Y)$} edge[-] (vxy);
        \node at (0.25, -6.0) (h) {\small  $T(Y,Z)$} edge[-] (vyz);
    \end{tikzpicture}
      \begin{tikzpicture}[xscale=0.8, yscale=0.6]
            \node at (-2.0, -4.0) (vyz) {\small ${\color{blue}\delta V'_Z(z,y)}$};

            \node at (0.5, -4.0) {\small FD: $Y \rightarrow Z$};

            \node at (0.25, -6.0) (h) {\small  $T(Y,Z)$} edge[-] (vyz);

            \node at (-2.0, -6.0) (vy) {\small ${\color{blue}\delta V_X(y)}$} edge[-] (vyz);

            \node at (-2, -8.0) (vxy) {\small ${\color{blue}\delta V'_X(y,x)}$} edge[-] (vy);
            \node at (0.5, -8.0) {\small FD: $X \rightarrow Y$};

            \node at (-2, -10.0) (vx) {\small ${\color{blue}\delta V_W(x)}$} edge[-] (vxy);

            \node at (-2.0, -12.0) (f) {\small  ${\color{blue}\delta R(x,w)}$} edge[-] (vx);

            \node at (0.25, -12.0) (g) {\small $S(X,Y)$} edge[-] (vxy);
      \end{tikzpicture}
    
        \caption{IVM for the non-hierarchical query in Ex.~\ref{ex:fds} in the presence of functional dependencies. Left: View tree for the query. Right: Delta view tree for update $\delta R$ to relation $R$.}
    \label{fig:q-hier-fd-view-trees}
\end{figure}
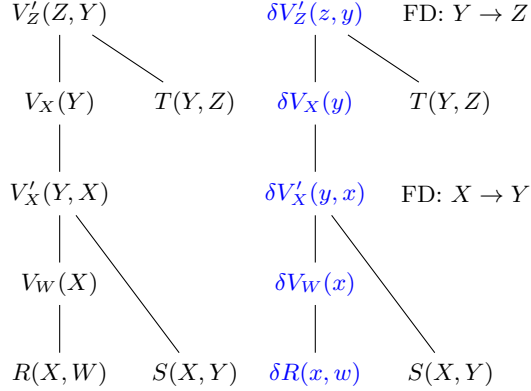

The above treatment of functional dependencies can be generalized to bounded degree  constraints: whereas $X\rightarrow Y$ states that an $X$-value $x$ can only be paired with one $Y$-value $y$ in $S$, the bounded degree constraint states that $x$ can only be paired with a constant number of distinct $Y$-values in $S$. This guarantees a constant-time lookup into $S$ for each update to $R$.

The small domain constraints, which state that a column has a constant number of values, are also studied in the IVM literature~\cite{BerkholzKS18}.

\paragraph{Primary Key - Foreign Key Constraints}
Whenever a relation has a primary key that is used as a foreign key in another relation, a {\em consistent} database  satisfies the constraint that a value for the foreign key must also exist as a value for the primary key. We call a batch of updates {\em valid} if it maps a consistent database to another consistent database. Our key observation is that there are non-hierarchical queries that can be maintained with (amortized) constant update time and enumeration delay under valid update batches. This holds {\em regardless of the execution order} of the updates. Typically, there are triggers that reject an update that can leave the database in an inconsistent state: they do not delete a value for a primary key if it is still used as foreign key and do not insert a value as a foreign key if it is not already a value for a primary key. Yet for optimization purposes, e.g., out-of-order execution of updates, it is desirable to allow intermediate inconsistent databases as long as the final (committed) database is consistent.

\begin{ex}\em 
    Consider a query over the simplified IMDB database used in the JOB benchmark~\cite{DBLP:journals/vldb/LeisRGMBKN18}. The query joins the following relations, where the non-join variables are denoted by dots:    

\hspace*{-1em}\begin{tabular}{llll}
\texttt{Title (T)}: & movie\_id,  &       & $\ldots$ \\
\texttt{Movie\_Companies (M)}:   & movie\_id, & company\_id,  &   $\ldots$ \\
\texttt{Company\_Name (C)}:  &   & company\_id,      & $\ldots$\\
\end{tabular}

The variables movie\_id and company\_id are foreign keys in \texttt{M} and primary keys in \texttt{T} and respectively \texttt{C}. The query is not $q$-hierarchical, nor its reduct under the fds defined by the primary keys. Consider a batch of updates to the relations. An insert of a new company with id $c$ into \texttt{C} takes constant time, if there are no records for $c$ in \texttt{M}. Yet if there are $n$ records for $c$ in \texttt{M} before the insert of $c$ into \texttt{C}, then it takes $O(n)$ to process this insert. It also means that before this insert (1) the database was  inconsistent and (2) each of the $n$ inserts into \texttt{M} took constant time, as there was no joining partner in \texttt{C}. On average, each of the $n$ inserts into \texttt{M} and the one insert into \texttt{C} needs amortized constant time. After the insert into \texttt{C}, the inconsistency due to the absence of the primary key value $c$ is resolved.
Similar reasoning applies to a delete of the key $c$ from \texttt{C}: If there are no movies paired with $c$ in \texttt{M}, then this delete takes constant time and the database is consistent with respect to the key $c$. If there are $n$ movies paired with $c$ in \texttt{M}, then this delete takes $O(n)$ time. Yet in order for the database to become consistent with respect to $c$, each of these $n$ records in \texttt{M} have to be eventually deleted, each taking constant time. On average, each of the $n+1$ deletes will thus take constant time.
\end{ex}


\subsection{Static versus Dynamic Relations}

In practice, there are relations, like Inventory and Sales, that are updated much more frequently than others, like Stores and Demographics. For the specific task of maintaining the output of a query over such relations within a given time frame (e.g., a day or even a month), we may as well practically consider the latter relations static. This can have a significant positive impact on the complexity of query maintenance: Non-$q$-hierarchical queries may enjoy constant update time and constant enumeration delay under the realistic restriction that some relations are static. Recent work~\cite{StaticDynamic:2024} puts forward a syntactic characterization of those tractable queries that admit the best possible maintenance in this mixed static-dynamic setting. The class of such queries strictly contains the class of $q$-hierarchical queries, which admit the best possible maintenance already in the all-dynamic setting. Intuitively, for such queries we can construct view trees much like for $q$-hierarchical queries, where now the materialized views are dynamic, if they are defined over at least one dynamic relation, and static, if they are defined over static relations only. To ensure constant update time, the schema of each dynamic view covers the schema of each of its sibling views; in the all-dynamic setting, this property holds for all views since they are all dynamic, whereas now it only needs to hold for the dynamic views. To ensure constant enumeration delay, the set of free variables of the query is precisely the set of variables of all the views in a subtree containing the root of the view tree; this property is the same as in the all-dynamic setting. 

\begin{ex}\em
 Consider the following query, where the dynamic relations $R$ and $S$ are adorned with the superscript $d$, while the static relation $T$ is adorned with $s$:
\begin{align*}
    Q(A,B,C) = \sum_D R^d(A,D) \cdot S^d(A,B) \cdot T^s(B,C)
\end{align*}
This query is not $q$-hierarchical. yet we can maintain it efficiently using the following tree of materialized views:
\begin{center}
\begin{tikzpicture}[xscale=1.6, yscale=0.9]
    \node at (-2, 2) (vrst) {\small $V^d_{RST}(A) = V^d_B(A) \cdot V^d_D(A)$};
    \node at (-3, 1) (vd) {\small $V^d_D(A) = \sum_D R^d(A,D)$} edge[->] (vrst);
    \node at (-3, 0) (r) {\small $R^d(A,D)$} edge[->] (vd);
    \node at (-1, 1) (vb) {\small $V^d_B(A) = \sum_B V^d_S(A,B)$} edge[->] (vrst);
    \node at (-1, 0) (vrs) {\small $V^d_S(A,B) = S^d(A,B) \cdot V^s_C(B)$} edge[->] (vb);
    \node at (-2, -1) (vr) {\small $S^d(A,B)$} edge[->] (vrs);
    \node at (0, -1) (vs) {\small $V^s_C(B) = \sum_C T^s(B,C)$} edge[->] (vrs);
    \node at (0, -2.0) (s) {\small $T^s(B,C)$} edge[->] (vs);
\end{tikzpicture}
\end{center}
\nop{
\begin{align*}
    V^s_C(B) &= \sum_C T^s(B,C)\\
    V^d_S(A,B) &= S^d(A,B) \cdot V^s_C(B) \\
    V^d_B(A) &= \sum_B V^d_S(A,B) \\
    V^d_D(A) &= \sum_D R^d(A,D) \\
    V^d_{RST}(A) &=  V^d_B(A) \cdot V^d_D(A)
\end{align*}
}
Single-tuple updates to both $R$ and $S$ require constant time; if we would allow updates to $T$ as well, then one such update would take linear time, as there may be linearly many $A$-values paired with a $B$-value in $S$. To enumerate, we proceed top-down: For each $A$-value $a$ in $V^d_{RST}$, we iterate over each $B$-value $b$ paired with $a$ in  $V^d_S(A,B)$, and then iterate over the $C$-values paired with $b$  in $T$.

The following non-$q$-hierarchical query can also be maintained efficiently, albeit after quadratic time preprocessing needed to join the static relations $S$ and $T$ on the bound variable $B$:
\begin{align*}
  Q(A, C, D) = \sum_B R^d(A, D) \cdot S^s(A, B)\cdot T^s(B, C)\cdot U^d(D)  
\end{align*}

Even the non-hierarchical query from Ex.~\ref{ex:q-hierarchical} can be maintained efficiently as long as the in-between relation $S$ is static:
\begin{align*}
    Q(A, B) = R^d(A) \cdot S^s(A, B)\cdot T^d(B)
\end{align*}
The reason is that the query output can only be a subset of the static relation $S$. The  dynamic relations $R$ and $T$ act as indices that identify one of the exponentially many possible outputs from the powerset of $S$. This requires however exponential time preprocessing.
\end{ex}

\subsection{Insert-only versus Insert-delete}

The structure of the update batch can also influence the update time. Insert-only batches can be processed asymptotically faster than batches of both inserts and deletes~\cite{Kara:TODS:2020,DBLP:journals/pvldb/WangHDY23,DBLP:journals/corr/abs-2312-09331}. For instance, every $\alpha$-acyclic join query $Q$ can be maintained with amortized constant time for a single-tuple insert and constant enumeration delay. A simplified argument is fairly direct\footnote{To keep it simple, this argument disregards a $\log n$ factor in the time complexities.}: Given a batch of $n$ inserts to the initially empty database, we can first update the input database and then the factorized output of $Q$ over the updated database in $O(n)$ time~\cite{OlteanuZ15}. The average time per insert is thus constant. 
Recall from Theorem~\ref{th:q-hierarchical} that if we were to support both inserts and deletes, then the non-$q$-hierarchical queries that are $\alpha$-acyclic cannot enjoy constant update time and constant delay.
In the insert-only setting, off-the-shelf worst-case optimal join algorithms, such as LeapFrog TrieJoin~\cite{LeapFrogTrieJoin2014}, combined with delta queries can be used to achieve the best known update time for arbitrary join queries, while supporting constant (full or delta) enumeration delay~\cite{DBLP:journals/corr/abs-2312-09331}. 

In the insert-delete setting, the update time can be determined by the refined notion of enclosureness of the update sequence, which observes the insert/delete timestamps of the tuples~\cite{DBLP:journals/pvldb/WangHDY23}.

\begin{figure}

\centering
    
\tdplotsetmaincoords{73}{170}
\begin{tikzpicture}[xscale=1.92, yscale=1.44,tdplot_main_coords]
  \coordinate (O) at (0,0,0);


  \draw[thick,dotted,color=gray] (0,0,0) -- (0.5,0,0);
  \draw[thick,->] (0.5,0,0) -- (2,0,0) node[anchor=north]{\footnotesize $\log_{N}${delay}};
  \draw[thick,dotted,color=gray] (0,0,0) -- (0,3,0);
  \draw[thick,->] (0,3,0) -- (0,3.5,0) node[anchor=north]{\footnotesize $\log_{N}${preprocessing time}};
  \draw[thick,dotted,color=gray] (0,0,0) -- (0,0,0.5);
  \draw[thick,->] (0,0,0.5) -- (0,0,1.5) node[anchor=south]{\footnotesize $\log_{N}${update time}};

  \node[color=gray] at (-0.1,-0.1,0) () {\footnotesize $0$};
  \coordinate (P1) at (1,1,0);
  \node[] at (1,1,-0.2) () {\footnotesize $(1,0,1)$};
  \coordinate (P1x) at (1,0,0);
  \node[] at (1,-0.6,0.0) () {\footnotesize $1$};

  \coordinate (P1y) at (0,1,0);

  \coordinate (P2) at (0,1,1);
  \coordinate (P2xy) at (0,1,0);
  \coordinate (P2y) at (0,1,0);
  
  \coordinate (P2z) at (0,0,1);

  \coordinate (P3) at (0,2,2);
  \coordinate (P3z) at (0,0,2);


  \draw[dashed, color=black] (P1x) -- (P1);
  \draw[dashed, color=black] (P1y) -- (P1);

  \node[color=gray] at (-0.16,0.7,0) () {\footnotesize $1$};


  \node[] at (0.25,0,1) () {\footnotesize $\delta = 1$};

  \draw[dashed, color=black] (P2) -- (P2y);
  \draw[dashed, color=black] (P2) -- (P2z);

  \filldraw[blue] (0,1,1) ellipse(1.5pt and 2pt);
  \node at (-0.23,1,1) {\color{blue} \footnotesize Eager};

  \filldraw[goodgreen] (1,1,0) ellipse(1.5pt and 2pt);
  \node at (1.2,1,0) {\color{goodgreen} \footnotesize Lazy};

  \node at (0,1,0.75) {\color{red} \footnotesize \ivme};

  \coordinate (X_half) at (0.5,0,0);
  \coordinate (Z_half) at (0,0,0.5);
  \coordinate (XZ_half) at (0.5,0,0.5);
  \fill[gray!50, opacity=0.4] (XZ_half) -- (Z_half) -- (0,3.0,0.5) -- (0.5,3.0,0.5) -- cycle;
  \fill[gray!50, opacity=0.4] (XZ_half) -- (X_half) -- (0.5,3.0,0) -- (0.5,3.0,0.5) -- cycle;
  \fill[gray!50, opacity=0.4] (0.5,3.0,0.5) -- (0.5,3.0,0) -- (0,3.0,0) -- (0,3.0,0.5) -- cycle;
  \draw[color=gray] (0.5,0,0) -- (0.5,3.0,0);
  \draw[color=gray] (0.5,0,0) -- (0.5,0,0.5);
  \draw[color=gray] (0.5,0,0.5) -- (0,0,0.5);
  \draw[color=gray] (0.5,0,0.5) -- (0.5,3.0,0.5);
  \draw[color=gray] (0.5,3.0,0.5) -- (0.5,3.0,0);
  \draw[color=gray] (0.5,3.0,0.5) -- (0,3.0,0.5);
  \draw[color=gray] (0.5,3.0,0) -- (0,3.0,0);
  \draw[color=gray] (0,3.0,0.5) -- (0,0,0.5);
  \draw[color=gray] (0,3.0,0) -- (0,3.0,0.5);

  \node at (0.55,-0.5,0) () {\footnotesize$1/2$};
  \node at (0.15,0,0.6) () {\footnotesize$1/2$};

    \filldraw[red] (0,1,1) ellipse(0.8pt and 1.04pt);    
    \filldraw[red] (1,1,0) ellipse(0.8pt and 1.04pt);
    
    \draw[dashed, color=gray] (0.5,1.0,0.5) -- (0.5,1.0,0);
    
    \draw[color=red, very thick] (P1) -- (P2);
    \filldraw[red] (0.5,1,0.5) ellipse(1pt and 1.333pt);
    \draw[->, thick, color=black, dotted] (0.8,1.0,-0.5) -- (0.51,1.05,0.45);
    \node at (0.8,1.0,-0.65) () {\color{red}\footnotesize $(1,1/2,1/2)$};

\end{tikzpicture}

\caption{Maintenance trade-off space for the query in Ex.~\ref{ex:tradeoff}.}
\label{fig:non-q-lower-bound}
\end{figure}
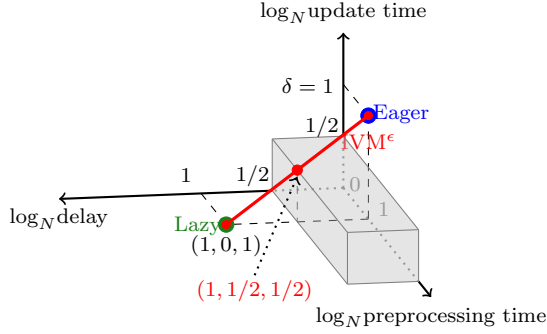

\section{Optimality Beyond the Q Continuum}
\label{sec:optimality}

Beyond the development highlighted in Sec.~\ref{sec:q}, there are results on worst-case optimal IVM for some limited classes of conjunctive queries: hierarchical queries~\cite{DBLP:conf/pods/0002NOZ20,DBLP:journals/lmcs/KaraNOZ23}  and cyclic queries (Loomis-Whitney)~\cite{DBLP:conf/icdt/KaraNNOZ19,Kara:TODS:2020} with {\em arbitrary free variables}. The single-tuple update time for such queries meets the conditional lower bound in Theorems~\ref{th:q-hierarchical} and \ref{thm:cqap}. The preprocessing time, update time, and enumeration delay exhibit a trade-off and for certain points in the trade-off space, they become (strongly or weakly) Pareto optimal.

\begin{ex}\label{ex:tradeoff}\em 
    Fig.~\ref{fig:non-q-lower-bound} depicts the trade-off between the preprocessing time, the time for a single-tuple update, and the enumeration delay for the simplest non-$q$-hierarchical query:
    \begin{align*}
        Q(A) = \sum_{B}R(A,B) \cdot S(B).
    \end{align*}
The gray cuboid is off limits, unless the Online Matrix-Vector Multiplication conjecture fails~\cite{BerkholzKS17}, since each point strictly included in the gray cuboid is a combination of some preprocessing time and $\bigO{N^{\frac{1}{2} - \gamma}}$ update time and delay for $\gamma > 0$ ($N$ is the database size).

Two extreme IVM approaches are (1) to maintain the materialized query output after every update (eager approach) and (2) to only update the input relations and materialize the query output on an enumeration request (lazy approach).
\ivme~\cite{DBLP:conf/pods/0002NOZ20} achieves each point on the line between the two extremes: it needs $O(N)$ time for preprocessing, $O(N^\epsilon)$ time for a single-tuple update, and $O(N^{1-\epsilon})$ enumeration delay. In particular, it touches the gray cuboid for $\epsilon=1/2$: At the point $(1, 1/2, 1/2)$, \ivme is weakly Pareto worst-case optimal in the sense that there can be no tighter upper bounds for both the update time and the enumeration delay (unless the OuMv conjecture fails). To achieve this trade-off, \ivme partitions relation $R$ based on heavy and light $A$-values, as shown in Sec.~\ref{sec:heavy-light}.

\nop{
The eager approach materializes the query output in $O(N)$ time. On an update $\delta R = \{(a,b) \mapsto m\}$ to $R$, it updates the query output in constant time by checking whether $b$ is in $S$. On an update $\delta S = \{(b) \mapsto m\}$ to $S$, it finds the $A$-values paired with $b$ in $R$ and updates the query output in $O(N)$ time. The query output is materialized and maintained and can thus be enumerated with constant delay.

At the other extreme, the lazy approach it may eliminate the dangling tuples from the input relations at preprocessing. On an update, it just updates the relations in constant time and defers the computation of the query to the enumeration phase. On an enumeration request, it eliminates the dangling tuples from $R$ in $\bigO{N}$ time and enumerate the result from $R$. The enumeration delay is thus $\bigO{N}$.

Our approach, \ivme~\cite{DBLP:conf/pods/0002NOZ20}, shows a trade-off between the eager and lazy approaches, which is denoted by the red line that connects the blue and green points in the figure. It achieves sub-linear update time and sub-linear enumeration delay. It is weakly Pareto worst-case optimal at point $(1.0, \frac{1}{2}, \frac{1}{2})$: there can be no tighter upper bounds for both update time and enumeration delay. 
}

\end{ex}

\nop{
\begin{figure*}[t]
    \centering
    \begin{minipage}{0.172\textwidth}
    \begin{tikzpicture}[xscale=0.7, yscale=0.8]
        \node at (0, 0) (ql) {\scriptsize $Q_L(A) = \sum_B R^L(A,B)\cdot S(B)$};
        \node at (-1, -1) (r) {\scriptsize $R^L(A,B)$} edge[->] (ql);
        \node at (1, -1) (s) {\scriptsize $S(B)$} edge[->] (ql);
    \end{tikzpicture}
    \end{minipage}\hfill
    \begin{minipage}{0.158\textwidth}
    \begin{tikzpicture}[xscale=0.7, yscale=0.8]
        \node at (0, 0) (ql) {\scriptsize \color{blue} $\delta Q_L(a) = \delta R^L(a,b)\cdot S(b)$};
        \node at (-1, -1) (r) {\scriptsize \color{blue} $\delta R^L(a,b)$} edge[->] (ql);
        \node at (1, -1) (s) {\scriptsize $S(B)$} edge[->] (ql);
    \end{tikzpicture}
    \end{minipage}\hfill
    \begin{minipage}{0.162\textwidth}
    \begin{tikzpicture}[xscale=0.7, yscale=0.8]
        \node at (0, 0) (ql) {\scriptsize \color{blue} $\delta Q_L(A) = R^L(A,b)\cdot \delta S(b)$};
        \node at (-1, -1) (r) {\scriptsize  $ R^L(A,B)$} edge[->] (ql);
        \node at (1, -1) (s) {\scriptsize \color{blue} $\delta S(b)$} edge[->] (ql);
    \end{tikzpicture}
    \end{minipage}\hfill
    \hspace{-0.3cm}
    \begin{minipage}{0.165\textwidth}
    \begin{tikzpicture}[xscale=0.7, yscale=0.8]
        \node at (-0.5, 0) (ql) {\scriptsize $V_{RS}(B) = V_R(B) \cdot S(B)$};
        \node at (-1, -1) (r) {\scriptsize $V_R(B) = \sum_{A} R^H(A,B)$} edge[->] (ql);
        \node at (-1, -2) () {\scriptsize $R^H(A,B)$} edge[->] (r);
        \node at (1, -1) (s) {\scriptsize $S(B)$} edge[->] (ql);
    \end{tikzpicture}
    \end{minipage}\hfill
    \begin{minipage}{0.154\textwidth}
    \begin{tikzpicture}[xscale=0.7, yscale=0.8]
        \node at (-0.5, 0) (ql) {\scriptsize \color{blue} $\delta V_{RS}(B) = \delta V_R(b) \cdot S(b)$};
        \node at (-1, -1) (r) {\scriptsize \color{blue} $V_R(b) = \delta R^H(a,b)$} edge[->] (ql);
        \node at (-1, -2) () {\scriptsize \color{blue} $\delta R^H(a,b)$} edge[->] (r);
        \node at (0.85, -1) (s) {\scriptsize $S(B)$} edge[->] (ql);
    \end{tikzpicture}
    \end{minipage}\hfill
    \begin{minipage}{0.165\textwidth}
    \begin{tikzpicture}[xscale=0.7, yscale=0.8]
        \node at (-0.5, 0) (ql) {\scriptsize \color{blue} $\delta V_{RS}(B) = V_R(b) \cdot \delta S(b)$};
        \node at (-1, -1) (r) {\scriptsize $V_R(B) = \sum_{A} R^H(A,B)$} edge[->] (ql);
        \node at (-1, -2) () {\scriptsize $R^H(A,B)$} edge[->] (r);
        \node at (1, -1) (s) {\scriptsize \color{blue} $S(b)$} edge[->] (ql);
    \end{tikzpicture}
    \end{minipage}\hfill
    \caption{Delta view trees for the maintenance of the non-q-hierarchical query $Q(A) = \sum_{B} R(A,B) \cdot S(B)$.}
    \label{fig:ivme-view-trees}
\end{figure*}
}

\section{Conclusion}
\label{sec:further}

To solve the IVM problem optimally, one can benefit from a careful analysis of: the structure of the queries when taken individually or as a set, the integrity constraints and value degrees in the data, the type of updates (insert-only versus insert-delete), and the update frequency of the relations. In practice, this analysis is an effective guide to the best possible maintenance scenario with (amortized) constant update time and enumeration delay, which translates to higher maintenance throughput than prior IVM approaches. Despite this progress, the IVM problem remains rich of challenges brought by the interplay of queries, data constraints, and updates.

\paragraph*{Further Recent Work on IVM}
Beyond conjunctive queries, there are studies on IVM for intersection joins~\cite{TaoY22}, Datalog~\cite{DBLP:journals/ai/MotikNPH19}, Differential Datalog~\cite{DBLP:conf/fossacs/AbadiMP15}, and DBSP~\cite{BudiuCMRT23}. The
descriptive complexity of recursive queries, such as reachability, under updates has also been investigated~\cite{DattaKMSZ18,DBLP:journals/sigmod/SchwentickVZ20}. There is also recent development on a distinct trend in IVM: the maintenance of complex analytics over evolving databases, which includes linear algebra computation~\cite{DBLP:conf/sigmod/NikolicEK14}, collection programming~\cite{DBLP:conf/pods/0001LT16}, in-database machine learning~\cite{DBLP:conf/sigmod/NikolicO18,DBLP:conf/sigmod/NikolicZ0O20,FIVM:VLDBJ:2023}, probabilistic reasoning~\cite{DBLP:conf/pods/BerkholzM21,kara2024conjunctive}, and Shapley-based fact attribution.

\nop{
\begin{figure*}[t]
	\newcommand{\sspace}{\hspace{0.13em}}
		\begin{tikzpicture}[scale=1.2]
			\node at(0,4.7){\bf Dynamic Query Evaluation};
			\node at(0,4.3){\textit{preprocessing time{\sspace}/{\sspace}enumeration delay{\sspace}/{\sspace}update time}};
			\draw[line width = 0.5pt] (0,0.5) ellipse (6cm and 3.5cm);
			\node at (0,3.7)  {\bf conjunctive};
			\node at (0,3.3)  {$\bigO{N^{\fw}}${\sspace}/{\sspace}$\bigO{1}${\sspace}/{\sspace}$\bigO{N^{\dfw}}$
			\ \ \cite{Nikolic:SIGMOD:18}};

		\node at (-2.6,2.8)  {$\bigO{N^{\widehat{\fw}}}${\sspace}/{\sspace}$\bigO{1}${\sspace}/{\sspace}$\bigO{N^{\widehat{\fw}-1}}^*$
			\cite{Nikolic:SIGMOD:18}};
			
			\node at (1.75,2.8)  {triangle query:  $\bigO{N^{\frac{3}{2}}}${\sspace}/{\sspace}$\bigO{1}${\sspace}/{\sspace}$\bigO{N^{\frac{1}{2}}}^{\ast}$ \cite{DBLP:conf/icdt/KaraNNOZ19}};

			\node at (0,2.3)  {\color{blue} $\bigO{N^{\fw}}${\sspace}/{\sspace}$\bigO{1}${\sspace}/{\sspace}$\bigO{N^{\fw-1}}^*$
			\ \ \cite{DBLP:journals/corr/abs-2312-09331}};		

			\draw[line width = 0.5pt] (0,-0.2) ellipse (5.7cm and 2.2cm);
			\node at (0,1.7)  {\bf $\alpha$-acyclic};

			\draw[line width = 0.5pt]  (2.5,-0.28) ellipse (3cm and 1.45cm);
			\node at (3.5,0.63)  {\bf free-connex};
			\node at (3.7,0.2)  {$\bigO{N}${\sspace}/{\sspace}$\bigO{1}${\sspace}/{\sspace}$\bigO{N}$};			
			\node at (3.6,-0.2){\cite{Idris:dynamic:SIGMOD:2017}};

			\node at (3.7,-0.9)  {\color{blue} $\bigO{N}${\sspace}/{\sspace}$\bigO{1}${\sspace}/{\sspace}$\bigO{1}^*$};			
			\node at (3.6,-1.3){\color{blue} \cite{DBLP:journals/corr/abs-2312-09331}};

			\node at (0.9,0.3)  {\bf q-hierarchical};
			\node at (0.9,0.05)  {$=$};
			\node at (0.9,-0.15)  {\bf $\dfw_0$-hierarchical};
			\node at (1,-0.53)  {
				$\fw = 1$, $\dfw = 0$};
			\node at (1,0.65){\cite{BerkholzKS17}};

			\draw[line width = 0.5pt] (-0.05,-0.7505) -- (2.32,-0.7505);

			\draw[line width = 0.5pt] (-0.05,-0.75) arc(100:273:5cm and 0.452cm);

			\draw[line width = 0.5pt] (-1.45,-0.3) ellipse (3.95cm and 1.75cm);
				
				\node at (-2.6,0.9)  {\bf hierarchical};

			\node at (-2.5,0.4)  {
			$\bigO{N^{1 + (\fw -1 )\eps}}${\sspace}/{\sspace}$\bigO{N^{1 - \eps}}${\sspace}/{\sspace}$\bigO{N^{\dfw\eps}}^{\ast}$};

			\node at (-2.5,0)  {
				$\eps \in [0,1]$};

			\node at (-0.7,-1.05)  {\bf $\dfw_1$-hierarchical};

			\node at (-0.7,-1.4)  {
				$\fw \in \{1,2\}$, $\dfw =1$};

		\end{tikzpicture}
	\caption{Landscape of dynamic query evaluation.
		$\fw$: fractional hypetree width; 
  $\widehat{\fw}$: fractional hypetree width of an extension of the query;
  $\dfw$: dynamic width; 
  *:~amortized time; 
  $\widehat{\fw}$: fractional hypetree width
		of an extension of the query. Complexities highlighted in blue refer to the insert-only setting.}
	\label{fig:venn_diagram}
\end{figure*}
}

\paragraph*{Acknowledgements} The author would like to acknowledge the UZH Global Strategy and Partnerships Funding Scheme and Haozhe Zhang for his assistance.

\bibliographystyle{abbrv}
\bibliography{bibtex}

\end{document}